\newtheorem{defn}{Definition}
\newtheorem{thm}{Theorem}[section]
\newtheorem{cor}[thm]{Corollary}
\newtheorem{prop}{Proposition}
\newtheorem{lem}[thm]{Lemma}
\newtheorem{conj}[thm]{Conjecture}
\newtheorem{constr}[thm]{Construction}
\newtheorem{note}{Remark}
\newtheorem{example}{Example}
\newcommand{\bit}{\begin{itemize}}
\newcommand{\eit}{\end{itemize}}
\newcommand{\bcor}{\begin{cor}}
\newcommand{\ecor}{\end{cor}}
\newcommand{\beq}{\begin{equation}}
\newcommand{\eeq}{\end{equation}}
\newcommand{\beqn}{\begin{equation*}}
\newcommand{\eeqn}{\end{equation*}}
\newcommand{\bea}{\begin{eqnarray}}
\newcommand{\eea}{\end{eqnarray}}
\newcommand{\bean}{\begin{eqnarray*}}
\newcommand{\eean}{\end{eqnarray*}}
\newcommand{\ben}{\begin{enumerate}}
\newcommand{\een}{\end{enumerate}}
\newcommand{\bdefn}{\begin{defn}}
\newcommand{\edefn}{\end{defn}}
\newcommand{\bnote}{\begin{note}}
\newcommand{\enote}{\end{note}}
\newcommand{\bprop}{\begin{prop}}
\newcommand{\eprop}{\end{prop}}
\newcommand{\blem}{\begin{lem}}
\newcommand{\elem}{\end{lem}}
\newcommand{\bthm}{\begin{thm}}
\newcommand{\ethm}{\end{thm}}
\newcommand{\bconj}{\begin{conj}}
\newcommand{\econj}{\end{conj}}
\newcommand{\bconstr}{\begin{constr}}
\newcommand{\econstr}{\end{constr}}
\newcommand{\bpf}{\begin{proof}}
\newcommand{\epf}{\end{proof}}
\title{Weight Enumerators and Higher Support Weights of Maximally Recoverable Codes}
\author{V. Lalitha and Satyanarayana V. Lokam
\thanks{V. Lalitha is with Signal Processing and Communications Research Center, International Institute of Information Technology, Hyderabad,
 India (email:  lalitha.v@iiit.ac.in). This work was carried out when the author was doing an internship with Microsoft Research, Bangalore.}
\thanks{Satyanarayana V. Lokam is with Microsoft Research, Bangalore, India (email:  satya@microsoft.com).} }
\begin{document}

\maketitle

\begin{abstract}
 In this paper, we establish the matroid structures corresponding to data-local and local maximally recoverable codes (MRC). The matroid structures of these codes can be used to determine the associated Tutte polynomial. Greene proved that the weight enumerators of any code can be determined from its associated Tutte polynomial. We will use this result to derive explicit expressions for the weight enumerators of data-local and local MRC. Also, Britz proved that the higher support weights of any code can be determined from its associated Tutte polynomial. We will use this result to derive expressions for the higher support weights of data-local and local MRC with two local codes.
\end{abstract}

\section{Introduction}\label{sec:intro}

In a distributed storage system, efficient repair of failed nodes is becoming increasingly important in addition to ensuring a given level of reliability and low storage overhead. Two recent approaches for efficient node repair are regenerating codes and codes with locality. Regenerating codes, introduced in \cite{DimGodWuWaiRam}, tradeoff repair bandwidth for storage overhead. On the other hand, codes with locality \cite{GopHuaSimYek} tradeoff repair degree (number of nodes accessed to repair a failed node) for storage overhead. In this paper, we will deal with codes with locality.

Let $\mathcal{C}$ be an $[n,k,d_{\min}]$ linear code $\mathcal{C}$ over the field $\mathbb{F}_q$. The $i$th code-symbol $c_i$, $1 \leq i
\leq n$,  is said to have locality $r$ if this symbol can be recovered by accessing at most $r$ other code symbols of code $\mathcal{C}$. A systematic linear code $\mathcal{C}$ having minimum distance $d$, where all $k$ message
symbols have locality $r$, is said to be an $(r, d_{\min})$ code. It has been proved in \cite{GopHuaSimYek} that the minimum distance of an $(r,d_{\min})$ code is upper bounded by
\bea \label{eq:gopalan_bound}
d_{\min} & \leq &  n- k -\left\lceil \frac{k}{r} \right\rceil + 2.
\eea

A code is said to have information locality if the $k$ message symbols have locality $r$ and it is said to have all-symbol locality if all the $n$ code symbols have locality $r$. Pyramid codes, which were constructed earlier \cite{HuaCheLi}, are shown to be optimal codes with information locality.  For the case when $(r+1) \mid n$, the existence of optimal codes with all-symbol locality over a large field size was established. Families of codes with all-symbol locality over low field size are constructed in \cite{TamBar}. Locality in the setting of nonlinear codes has been studied in  \cite{PapDim}, \cite{ForYek}.  The connection between codes with locality (also known as locally recoverable codes (LRC)) and matroids has been studied in \cite{TamPapDim}. Codes with locality have been implemented and their performance evaluated in two systems, the first is Windows Azure storage  \cite{HuaSimXu_etal_azure} and the second is Hadoop Distributed File System \cite{HuaSimXu_etal_azure, sathiamoorthy}.

\subsection{Maximally Recoverable Codes}
The idea of maximal recoverability of a code was introduced in \cite{CheHuaLi}. Maximally recoverable codes in the context of codes with locality have been studied in \cite{GopHuaJenYek}, \cite{BlaHafHet}.
A code is said to be maximally recoverable if it corrects all erasure patterns which can potentially be corrected, given the locality constraints.

Let $\mathcal{C}$ denote an $[n, k, d_{\min}]$ linear code over $\mathbb{F}_q$ in the systematic form. Let $\mathcal{I} = [k]$
denote the indices of the message symbols. For any set $S \subseteq [n]$, we will use $\mathcal{C}|_S$ to denote the
restriction of $\mathcal{C}$ to the coordinates indexed by $S$.  

\vspace{0.1in}

\begin{defn}[Data-Local Maximally Recoverable Code]
Consider an $[n,k,d_{\min}]$ code $\mathcal{C}$ over $\mathbb{F}_q$ with information-symbol locality $r$, where $n = k + \frac{k}{r} + h$. Let a codeword of $\mathcal{C}$ be denoted by $\bold{c} = (c_1, \ldots, c_n)$, where $c_1, \ldots, c_k$ denote the message symbols and $c_{k+1}, \ldots, c_n$ denote the parity symbols. Let $c_{k+i}, 1 \leq i \leq \frac{k}{r}$ denote the local parity corresponding to the message symbols $(c_{(i-1)r+1}, \ldots c_{ir})$, i.e., $c_{k+i} = \sum_{j=1}^r a_{i,j} \ c_{(i-1)r+j}$, where all $a_{i,j} \in \mathbb{F}_q$ are nonzero. The remaining $n-k-\frac{k}{r} = h$ parities corresponding to parity symbols $c_{k+\frac{k}{r} + 1}, \ldots, c_n$ are global parities and are allowed to depend on all $k$ message symbols. Let the supports of the $\frac{k}{r}$ local groups be $\{ S_1, S_2, \ldots, S_{\frac{k}{r}} \}$. $\mathcal{C}$ is said to be a maximally recoverable code if for any set $E$ such that $|E| = k+h$ and $|E \cap S_i| = r, 1 \leq i \leq \frac{k}{r}$, the punctured code $\mathcal{C}|_E$ is a $[k+h,k,h+1]$ MDS code.
\end{defn}

\vspace{0.1in}

\begin{defn}[Local Maximally Recoverable Code]

Consider an $[n,k,d_{\min}]$ code $\mathcal{C}$ over $\mathbb{F}_q$ with all-symbol locality $r$, where $\frac{n}{r+1} = \frac{k +  h}{r}$.  Let a codeword of $\mathcal{C}$ be denoted by $\bold{c} = (c_1, \ldots, c_n)$, where $c_1, \ldots, c_k$ denote the message symbols and $c_{k+1}, \ldots, c_n$ denote the parity symbols. The first $h$ parities corresponding to parity symbols $c_{k + 1}, \ldots, c_{k+h}$ are global parities and are allowed to depend on all $k$ message symbols. Let $c_{k+h+i}, 1 \leq i \leq \frac{k+h}{r}$ denote the local parity corresponding to the code symbols $(c_{(i-1)r+1}, \ldots c_{ir})$, i.e., $c_{k+h+i} = \sum_{j=1}^r a_{i,j} \ c_{(i-1)r+j}$, where all $a_{i,j} \in \mathbb{F}_q$ are nonzero. Let the supports of the $\frac{k+h}{r}$ local groups be $\{ S_1, S_2, \ldots, S_{\frac{k+h}{r}} \}$. $\mathcal{C}$ is said to be a maximally recoverable code if for any set $E$ such that $|E| = k+h$ and $|E \cap S_i| = r, 1 \leq i \leq \frac{k+h}{r}$, the punctured code $\mathcal{C}|_E$ is a $[k+h,k,h+1]$ MDS code.

\end{defn}

In \cite{GopHuaJenYek}, explicit data-local and local MRC are constructed over a field size of the order of $k^{h-1}$. In \cite{BlaHafHet}, low field-size constructions of local MRC are given for specific set of parameters. In \cite{BlaHafHet}, the MRC are designed for application in an SSD (solid-state devices) setting, where there can be combination of sector and disk failures. For the same application, partial MRCs where specific patterns of erasures can be corrected, given the locality constraints, have been constructed in \cite{PlaBlaHaf}, \cite{LiLee}, \cite{BlaPlaSchYaa} for specific sets of parameters. 

We note here that if a code is either data-local or local maximally recoverable, then it is optimal with respect to the bound in \eqref{eq:gopalan_bound}. A different notion of partial MRC has been introduced in \cite{BalKum}, where it is required that the code is optimal and it is MDS when punctured on a specific set of coordinates.
\vspace{0.1in}

\subsection{Overview of Results} \label{sec:overview_results}

In this paper, we identify the matroid structures corresponding to $(k, r, h)$ data-local and local maximally recoverable codes (MRC). We note here that this is the main property differentiating an MRC from a locally recoverable code (LRC). Given the parameters $(k, r, h)$, the matroid structure of an MRC is fixed but the same is not true for LRC. The matroid structure of MRC is used to determine its associated Tutte polynomial.  
In \cite{Gre}, Greene proved that the weight enumerators of any code can be determined from its associated Tutte polynomial. We will apply the result in order to give explicit expressions for the weight enumerators of $(k, r, h)$ data-local MRC. These expressions are derived for a general class of parameters when the number of local codes $\ell < r+1$. The main calculation involved in deriving these expressions is to count the number of sets $U \subseteq [n]$ such that for a given $(u,v)$ $0 \leq u \leq n, 0 \leq v \leq k$, the size of $U$ is $u$ and rank of the code punctured to $U$ is $v$. Using the same techniques, we also derive the weight enumerators of local MRC with two local codes. In \cite{Bri_hsw}, \cite{Bri_critical}, Britz proved that the higher support weights of any code can be determined from its associated Tutte polynomial. 
We will apply the result in order to give  expressions for the higher support weights of data-local and local MRC with two local codes.

Section~\ref{sec:background} provides background on matroids, Tutte polynomial and the relation between weight enumerators and Tutte polynomial of a code.  The weight enumerators of data-local MRC are derived in Section \ref{sec:weights_datalocal_mrc} and the weight enumerators of local MRC with two local codes are derived in Section \ref{sec:weights_local_mrc}. Section \ref{sec:hsw_mrc} presents the higher support weights of data-local and local MRC with two local codes. Finally, we give conclusions and ongoing work in Section \ref{sec:concl}.

\vspace{0.1in}

\section{Matroids, Tutte Polynomial and Weight Enumerators} \label{sec:background}

In this section, we define a matroid and the matroid associated with a code. Then, we introduce the Tutte polynomial corresponding to a code.  We will present the result by Greene \cite{Gre}, which relates the Tutte polynomial of a code to its weight enumerator. We will illustrate all the ideas using the example of MDS codes and finally derive the weight enumerators of MDS codes.

\begin{defn}[Matroid]
A matroid $M$ is defined by a pair $M =  (S, \mathcal{I})$, where $S$ is a finite ground set and $\mathcal{I}$ is the set of subsets of $S$ which are termed independent sets.
The matroid $M$ is required to satisy the following axioms:
\ben
\item If $I \in \mathcal{I}$ and $J \in I$, then $J \in \mathcal{I}$.
\item If $I,J \in \mathcal{I}$ and $|J| > |I|$, then there exists an element $z \in J \setminus I$ such that $I \cup \{z\} \in \mathcal{I}$.
\een
\end{defn}
Rank function of a matroid $M$ maps an arbitrary subset $U$ of $S$ to the size of maximally independent set in $U$.
\begin{equation}
\text{rank}(U) = \max_{I \subseteq U, I \in \mathcal{I}} |I|.
\end{equation}

Let $\mathcal{C}$ be a code and $G$ denote its generator matrix.
Then, the matroid corresponding to the code $M(\mathcal{C})$ has the ground set as $S = \{1, \ldots, n\}$. The independent sets of the matroid are given by
\begin{equation}
\mathcal{I} = \{ I \subseteq S \ | \ \text{rank}(G|_I) = |I| \}.
\end{equation}
Rank function $\rho$ of the matroid $M(\mathcal{C})$ is given by $\rho(U) = \text{rank}(G|_U)$.

\begin{example}
Consider an $[n,k,n-k+1]$ MDS code. Since any $n-k$ erasures can be tolerated by the code, every subset of $S = [n]$ of size at most $k$ is an independent set.
\begin{equation}
\mathcal{I} = \{ I \subseteq S \ | \ |I| \leq k \}.
\end{equation}
The matroid defined above is termed as uniform matroid.
\end{example}

Consider a matroid $M(\mathcal{C})$ corresponding to a code $\mathcal{C}$ with parameters $[n,k,d_{\min}]$. The Tutte polynomial corresponding to the code $\mathcal{C}$ is a bivariate polynomial given by
\begin{equation} \label{eq:tutte}
T_C(X,Y) = \sum_{U \subseteq [n]} (X-1)^{k - \rho(U)}(Y-1)^{|U| - \rho(U)}.
\end{equation}
The Tutte polynomial depends only the matroid structure of the code. Hence, it is termed as {\em matroid invariant}. It can be observed from \eqref{eq:tutte} that to determine the Tutte polynomial of a code, for any given pair $(u, v)$ ($0 \leq u \leq n$ and $0 \leq v \leq k$), we have to count the number of sets $U \subseteq S$ such that $|U| = u$ and $\rho(U) = v$.

\begin{example} \label{ex:tutte_mds}
Consider an $[n,k,n-k+1]$ MDS code and its associated uniform matroid $M(\mathcal{C})$. We will evaluate the Tutte polynomial for the MDS code. It can be seen that for any set $U$, 
\ben
\item If $|U| = u \leq k$, then $\rho(U) = u$. The number of sets in this case is ${n \choose u}$.
\item If $|U| = u > k$, then $\rho(U) = k$. Even in this case, the number of sets is  ${n \choose u}$.
\een
The Tutte polynomial associated with the MDS code is given by
\begin{eqnarray} \label{eq:tutte_mds2}
T_C(X,Y) & = &  \sum_{U \subseteq [n]} (X-1)^{k - \rho(U)}(Y-1)^{|U| - \rho(U)} \nonumber \\
& = & \sum_{u=0}^k {n \choose u} (X-1)^{k-u} + \sum_{u=k+1}^n {n \choose u} (Y-1)^{u-k}.
\end{eqnarray}
\end{example}

The weight enumerator of a code $\mathcal{C}$ is defined as a polynomial $W_C(Z) = \sum_{i=0}^n A_i Z^i$, \\
where $A_i = | \bold{c} \in \mathcal{C} : wt(\bold{c}) = i|$. The weight enumerator of a code is related to the Tutte polynomial of the code as follows (Greene, \cite{Gre}):

\begin{thm}
The weight enumerator of an $[n,k,d_{\min}]$ code $W_C(Z)$ can de determined from the Tutte polynomial of the code $T_C(X,Y)$ using the following relation:
\begin{equation}
W_C(Z) = Z^{n-k}(1-Z)^k T_C( \frac{1+(q-1)Z}{1-Z}, \frac{1}{Z} ).
\end{equation}
\end{thm}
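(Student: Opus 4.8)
The plan is to follow Greene's original argument \cite{Gre}, reducing both sides of the claimed identity to the same sum over subsets of $[n]$. For $U\subseteq[n]$, let $\mathcal{C}(U):=\{\mathbf{c}\in\mathcal{C}:\mathrm{supp}(\mathbf{c})\subseteq U\}$ be the subcode of codewords that vanish outside $U$. The first step is the linear-algebra fact that $\dim\mathcal{C}(U)=k-\rho([n]\setminus U)$, hence $|\mathcal{C}(U)|=q^{\,k-\rho([n]\setminus U)}$. I would prove this by applying rank--nullity to the restriction map $\phi_U:\mathcal{C}\to\mathbb{F}_q^{[n]\setminus U}$ sending $\mathbf{c}$ to $\mathbf{c}|_{[n]\setminus U}$: its kernel is precisely $\mathcal{C}(U)$, and since every codeword has the form $\mathbf{m}G$, the image of $\phi_U$ is the row space of $G|_{[n]\setminus U}$, whose dimension equals $\rho([n]\setminus U)$ by the definition of the matroid $M(\mathcal{C})$.

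Writing $d_U:=\dim\mathcal{C}(U)$, the second step is to re-express the Tutte polynomial in terms of the $d_U$. Re-indexing the sum in \eqref{eq:tutte} by $U\mapsto[n]\setminus U$ and substituting $\rho([n]\setminus U)=k-d_U$ and $|[n]\setminus U|=n-|U|$ turns it into
\[
T_C(X,Y)=\sum_{U\subseteq[n]}(X-1)^{d_U}(Y-1)^{\,n-k-|U|+d_U}.
\]
Then I would plug in $X=\frac{1+(q-1)Z}{1-Z}$ and $Y=\frac1Z$, for which $X-1=\frac{qZ}{1-Z}$ and $Y-1=\frac{1-Z}{Z}$, collect the resulting powers of $q$, $Z$ and $1-Z$, and multiply by the prefactor $Z^{n-k}(1-Z)^{k}$. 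The factors $(1-Z)^{\pm d_U}$ cancel, and so does the $d_U$-dependent contribution to the exponent of $Z$, leaving exactly
\[
Z^{n-k}(1-Z)^{k}\,T_C\!\left(\tfrac{1+(q-1)Z}{1-Z},\tfrac1Z\right)=\sum_{U\subseteq[n]}|\mathcal{C}(U)|\,Z^{|U|}(1-Z)^{\,n-|U|}.
\]

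The last step is the combinatorial identity $\sum_{U}|\mathcal{C}(U)|\,Z^{|U|}(1-Z)^{\,n-|U|}=W_C(Z)$. I would prove it by writing $|\mathcal{C}(U)|=\sum_{\mathbf{c}\in\mathcal{C}(U)}1$ and swapping the order of summation to get $\sum_{\mathbf{c}\in\mathcal{C}}\sum_{U\supseteq\mathrm{supp}(\mathbf{c})}Z^{|U|}(1-Z)^{\,n-|U|}$; for a fixed codeword of weight $w$, writing $U=\mathrm{supp}(\mathbf{c})\cup W$ with $W$ ranging over subsets of the $(n-w)$-element complement of $\mathrm{supp}(\mathbf{c})$ makes the inner sum equal to $Z^{w}\sum_{j=0}^{n-w}\binom{n-w}{j}Z^{j}(1-Z)^{\,n-w-j}=Z^{w}$ by the binomial theorem, and summing over codewords gives $\sum_{\mathbf{c}\in\mathcal{C}}Z^{\,\mathrm{wt}(\mathbf{c})}=W_C(Z)$. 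I do not expect a genuine obstacle here: the argument is essentially bookkeeping, and the only thing to watch carefully is the exponent arithmetic in the substitution and the complement re-indexing. The one conceptual point — the place where the evaluation of the Tutte polynomial ``knows'' about the code and not just its matroid — is the identification of $q^{\,k-\rho([n]\setminus U)}$ with $|\mathcal{C}(U)|$ in the first step.
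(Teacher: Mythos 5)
Your proof is correct, and it is worth noting that the paper itself offers no proof of this theorem at all --- it is quoted from Greene \cite{Gre} and used as a black box --- so you have supplied an argument the paper omits. Your route is the standard one and it dovetails exactly with what the paper does immediately \emph{after} the theorem statement: your penultimate identity $Z^{n-k}(1-Z)^{k}\,T_C(\cdot,\cdot)=\sum_{U}|\mathcal{C}(U)|\,Z^{|U|}(1-Z)^{n-|U|}$ is precisely the paper's \eqref{eq:weight_set_sum}, $W_C(Z)=\sum_{U}Z^{n-|U|}(1-Z)^{|U|}q^{k-\rho(U)}$, after replacing $U$ by its complement and invoking your first step $|\mathcal{C}(U)|=q^{k-\rho([n]\setminus U)}$. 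The two ingredients you add that the paper leaves to the citation are exactly the two that carry the content: the rank--nullity identification of $q^{k-\rho([n]\setminus U)}$ with the number of codewords supported inside $U$ (the one place where the code, rather than just its matroid, enters), and the swap-and-binomial-collapse $\sum_{U\supseteq\mathrm{supp}(\mathbf{c})}Z^{|U|}(1-Z)^{n-|U|}=Z^{\mathrm{wt}(\mathbf{c})}$. Both are stated and justified correctly, and the exponent bookkeeping in the substitution $X-1=\frac{qZ}{1-Z}$, $Y-1=\frac{1-Z}{Z}$ checks out ($Z$-exponent $(n-k)+d_U-(n-k-|U|+d_U)=|U|$, $(1-Z)$-exponent $k-d_U+(n-k-|U|+d_U)=n-|U|$). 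The only implicit hypothesis is that $G$ has full row rank $k$, which is part of the paper's standing assumption that $\mathcal{C}$ is an $[n,k]$ code.
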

The above equation can be simplified as follows.
\begin{eqnarray}
W_C(Z) & = & Z^{n-k}(1-Z)^k T_C( \frac{1+(q-1)Z}{1-Z}, \frac{1}{Z} ) \nonumber \\
& = &      Z^{n-k}(1-Z)^k      \sum_{U \subseteq [n]} \left ( \frac{1+(q-1)Z}{1-Z} -1 \right )^{k - \rho(U)}\left (\frac{1}{Z}-1 \right )^{|U| - \rho(U)} \nonumber \\
& = &  Z^{n-k}(1-Z)^k      \sum_{U \subseteq [n]} \left ( \frac{qZ}{1-Z} \right )^ {k - \rho(U)}  \left (\frac{1}{Z}-1 \right )^{|U| - \rho(U)} \nonumber \\
& = &  \sum_{U \subseteq [n]} Z^{n- |U|} (1-Z)^{|U|} q^{k-\rho(U)}. \label{eq:weight_set_sum}
\end{eqnarray}
Since the minimum distance of the code is $d_{\min}$, it is clear that weight enumerator polynomial has to be such that $A_0 = 1, A_i = 0, 1 \leq i \leq d_{\min}-1$. To ensure this, we will split the above expression into two parts as follows:
\begin{eqnarray}
W_C(Z) & = &  \sum_{U \subseteq [n]} Z^{n- |U|} (1-Z)^{|U|} q^{k-\rho(U)} \nonumber \\
& = & \sum_{\substack{U \subseteq [n]: \\ |U| \leq n-d_{\min}}} Z^{n- |U|} (1-Z)^{|U|} q^{k-\rho(U)} + \underbrace{\sum_{\substack{U \subseteq [n]: \\ |U| \geq n-d_{\min}+1}} Z^{n- |U|} (1-Z)^{|U|} q^{k-\rho(U)}}_{T}. \label{eq:weight_poly1}
\end{eqnarray}
We note that for any set $U$ such that $|U| \geq n-d_{\min}+1$, $\rho(U) = k$. Hence, the second term $T$ in the \eqref{eq:weight_poly1} is given by
\begin{eqnarray}
T & = & \sum_{\substack{U \subseteq [n]: \\ |U| \geq n-d_{\min}+1}} Z^{n- |U|} (1-Z)^{|U|} \nonumber \\
& = & \sum_{u = n-d_{\min}+1}^n  {n \choose u} Z^{n- u} (1-Z)^{u} \nonumber \\
& = & 1 - \sum_{u = 0}^{n - d_{\min}}  {n \choose u} Z^{n- u} (1-Z)^{u} \nonumber \\
& = & 1 - \sum_{\substack{U \subseteq [n]: \\ |U| \leq n-d_{\min}}} Z^{n- |U|} (1-Z)^{|U|},
\end{eqnarray}
Thus, the weight enumerator in \eqref{eq:weight_poly1} can be rewritten as
\begin{equation}  \label{eq:weight_dmin}
W_C(Z) = 1 + \sum_{\substack{U \subseteq [n]: \\ |U| \leq n-d_{\min}}} Z^{n- |U|} (1-Z)^{|U|} (q^{k-\rho(U)} - 1).
\end{equation}

\begin{example}
Consider an $[n,k,d_{\min} = n-k+1]$ MDS code. We will apply the count of the pairs $(|U|, \rho(U))$ discussed in Example \ref{ex:tutte_mds} for calculating the weight distribution of the MDS code.
 The weight enumerator polynomial of the MDS code is given by
 \begin{eqnarray}
 W_C(Z) & = & 1 + \sum_{\substack{U \subseteq [n]: \\ |U| \leq n-d_{\min}}} Z^{n- |U|} (1-Z)^{|U|} (q^{k-\rho(U)} - 1) \nonumber \\
 & = &  1 + \sum_{u = 0}^{k-1} {n \choose u} Z^{n-u} (1-Z)^u (q^{k-u} - 1) \nonumber \\
 & = & 1 + \sum_{u = 0}^{k-1} \sum_{j=0}^u {n \choose u} Z^{n-u} Z^j (-1)^j {u \choose j} (q^{k-u} - 1).
 \end{eqnarray}
 By applying the change of variables $w = n-u+j, j=j$ to the above equation, we have
 \begin{eqnarray}
  W_C(Z) & = & 1 + \sum_{w = d_{\min}}^{n} \sum_{j=0}^{w-d_{\min}} {n \choose n-w+j} Z^w (-1)^j {n-w+j \choose j} (q^{k+w-n-j} - 1) \nonumber \\
  & \stackrel{(a)}{=} & \sum_{w = d_{\min}}^{n} \sum_{j=0}^{w-d_{\min}} {n \choose w} {w \choose j} Z^w (-1)^j (q^{k+w-n-j} - 1),
 \end{eqnarray}
 where $(a)$ follows from the fact that ${n \choose n-w+j} {n-w+j \choose j} = {n \choose w} {w \choose j}$. Thus the weight enumerators for the MDS code are given by
 \begin{eqnarray}
 A_0 = 1, \ \ \ \ \  A_w & = & \sum_{j=0}^{w-d_{\min}} {n \choose w} {w \choose j}  (-1)^j (q^{k+w-n-j} - 1), \ \  d_{\min} \leq w \leq n.
 \end{eqnarray}
\end{example}

\section{Weight Enumerators of Data-Local MRC}  \label{sec:weights_datalocal_mrc}

In this section, we identify the matroid structure of a data-local MRC. We derive the weight enumerators of a data-local MRC with two local codes. Then, we apply the result to an example code which is employed in Windows Azure storage. Using Macwilliams identity, we also present the weight enumerators of the dual of data-local MRCs with two local codes. Finally, we consider the general case, when the number of local codes $\ell$ is such that $3 \leq \ell < r+1$ and derive the weight enumerators of the corresponding data-local MRCs.

\begin{prop} \label{prop:mrc_matroid}
Consider an $[n,k,d_{\min}]$ data-local MRC with locality $r$, where $n = k + \frac{k}{r} + h$ and $d_{\min} = h+2$. Let the supports of the $\frac{k}{r}$ local groups be $\{ S_1, S_2, \ldots, S_{\frac{k}{r}} \}$. The set of all independent sets of the data-local MRC are given by
\begin{equation}
\mathcal{I} = \{ I \subseteq S \ | \ |I| \leq k, |I \cap S_i| \leq r, 1 \leq i \leq \frac{k}{r}\}.
\end{equation}
\end{prop}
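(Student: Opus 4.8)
The plan is to prove the claimed set equality by establishing the two inclusions separately: the inclusion "$\mathcal{I}\subseteq\{I:|I|\le k,\ |I\cap S_i|\le r\ \forall i\}$" from the explicit local parity equations, and the reverse inclusion from the maximal recoverability property. As a preliminary observation I would record that in a data-local MRC the $i$-th local group is $S_i=\{(i-1)r+1,\ldots,ir\}\cup\{k+i\}$, so the $S_i$ are pairwise disjoint sets of size $r+1$, and together with the set $H:=\{k+\frac{k}{r}+1,\ldots,n\}$ of the $h$ global-parity coordinates they partition $[n]$; in particular $n=\frac{k}{r}(r+1)+h$, and this partition is what all the counting below rests on.

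For the inclusion $\mathcal{I}\subseteq\{I:|I|\le k,\ |I\cap S_i|\le r\ \forall i\}$, I would start from a set $I$ with $\text{rank}(G|_I)=|I|$. Then $|I|=\text{rank}(G|_I)\le\text{rank}(G)=k$ immediately. For the local conditions, the defining equation $c_{k+i}=\sum_{j=1}^{r}a_{i,j}c_{(i-1)r+j}$ says precisely that the column of $G$ indexed by $k+i$ lies in the span of the columns indexed by $(i-1)r+1,\ldots,ir$; hence the $r+1$ columns of $G$ indexed by $S_i$ are linearly dependent, i.e.\ $S_i$ is a dependent set of $M(\mathcal{C})$. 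Since $|S_i|=r+1$, having $|I\cap S_i|\ge r+1$ would force $I\supseteq S_i$ and hence make $I$ dependent, a contradiction; so $|I\cap S_i|\le r$ for every $i$.

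The substantive direction is the reverse inclusion, and the idea is to embed any candidate independent set into a set to which the MRC definition applies. Given $I$ with $|I|\le k$ and $|I\cap S_i|\le r$ for all $i$, since $|I\cap S_i|\le r<r+1=|S_i|$ I can choose for each $i$ an $r$-element set $T_i$ with $I\cap S_i\subseteq T_i\subseteq S_i$, and then set $E:=H\cup\bigcup_{i=1}^{k/r}T_i$. By construction $I\subseteq E$ (since $I\cap H\subseteq H$ and $I\cap S_i\subseteq T_i$), $|E\cap S_i|=r$ for every $i$, and, using disjointness, $|E|=h+\frac{k}{r}\cdot r=k+h$. Thus $E$ satisfies the hypotheses in the definition of a data-local MRC, so $\mathcal{C}|_E$ is a $[k+h,k,h+1]$ MDS code; consequently every set of at most $k$ columns of its generator matrix $G|_E$ is linearly independent. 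Because $I\subseteq E$ and $|I|\le k$, the columns of $G$ indexed by $I$ are among these, hence linearly independent, which gives $\text{rank}(G|_I)=|I|$, i.e.\ $I\in\mathcal{I}$.

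I do not anticipate a genuine obstacle. The one ingredient I would use without proof is the standard characterization of MDS codes --- that every $k$ (and hence every at most $k$) columns of a generator matrix of an $[m,k,m-k+1]$ code are linearly independent --- which I would simply cite. The remaining pieces (the partition of $[n]$ into $S_1,\ldots,S_{k/r},H$, the size computation $|E|=k+h$, and the extension of $I$ inside each $S_i$) are routine bookkeeping. I would also note that the value $d_{\min}=h+2$ recorded in the statement is not needed for the matroid computation: it is the optimal distance forced by \eqref{eq:gopalan_bound} for these parameters and is consistent with the matroid just described.
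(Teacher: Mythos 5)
Your proof is correct and follows essentially the same route as the paper's: sets violating $|I|\le k$ or $|I\cap S_i|\le r$ are dependent (by the code's dimension, respectively the local parity relation), while any set satisfying both conditions is extended to a set $E$ with $|E\cap S_i|=r$ and $|E|=k+h$, whereupon the MDS property of $\mathcal{C}|_E$ yields independence. Your version merely spells out the construction of $E$ and the column-dependency argument in more detail than the paper does.
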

\bpf
If $I \nsubseteq \mathcal{I}$, then either $|I| \geq k+1$ or there exists a local group $i$ such that $|I \cap S_i| = r+1$. In both cases, the set $I$ is a dependent set either by the dimension of the code or by the locality of the code, respectively. If $I \in \mathcal{I}$, then $I$ can be extended to a set $E$ such that $|E \cap S_i| = r$. Hence, $\mathcal{C}|_E$ is a $[k+h, k, h+1]$ MDS code.  Since $I \subset E$ and $|I| \leq k$, $I$ is an independent set by the MDS property of $\mathcal{C}|_E$.
\epf

\subsection{Weight Enumerators of Data-Local MRC with Two Local Codes}

\begin{thm} \label{thm:weight_two_codes}
Consider an $[n,k,d_{\min}]$ data-local MRC with locality $r$, where $k = 2r$, $n = k + \frac{k}{r} + h$ and $d_{\min} = h+2$. Let the supports of the two local groups be $\{ S_1, S_2 \}$. The weight enumerators of the code are given by
\begin{eqnarray}
A_{h+2} & =  & 2 {n-r-1 \choose r-1}(q-1), \\
A_w & = &\sum_{j=0}^{w-h-3} {n \choose w}{w \choose j} (-1)^j (q^{k+w-n-j}-1) \nonumber \\
&&+ \sum_{j=0}^{w-h-2} 2 (q-1) {n-r-1 \choose n-w+j-r-1} {n-w+j \choose j} (-1)^j q^{k+w-n-j},  h+3 \leq w \leq r+h+1, \\
A_w & = &\sum_{j=0}^{w-h-3} {n \choose w}{w \choose j} (-1)^j (q^{k+w-n-j}-1) \nonumber \\
&&+ \sum_{j=w-r-h-1}^{w-h-2} 2 (q-1) {n-r-1 \choose n-w+j-r-1} {n-w+j \choose j} (-1)^j q^{k+w-n-j},  r+h+2 \leq w \leq n.
\end{eqnarray}

\end{thm}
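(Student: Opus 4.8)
The plan is to combine the matroid description in Proposition~\ref{prop:mrc_matroid} with the closed form \eqref{eq:weight_dmin}, so that the entire computation reduces to counting, for each $u$ with $0 \le u \le n - d_{\min} = 2r$, the number of subsets $U \subseteq [n]$ with $|U| = u$, split according to the value of $\rho(U)$. The first observation I would make is that, since $|S_1| = |S_2| = r+1$ and $2(r+1) = 2r+2 > 2r \ge u$, the two local groups cannot both satisfy $|U \cap S_i| = r+1$; hence for $|U| = u \le 2r$ there are only two possibilities: either $U$ is independent (i.e. $|U \cap S_i| \le r$ for $i=1,2$), in which case $\rho(U) = u$; or exactly one $S_i$ satisfies $|U \cap S_i| = r+1$, in which case deleting one element of that group from $U$ produces a maximal independent subset and $\rho(U) = u-1$. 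In particular $\rho(U)$ never has to be truncated at $k$ in this range. Counting the sets of the second type by first including all of $S_i$ gives $2\binom{n-r-1}{u-r-1}$ such $U$, and hence $\binom{n}{u} - 2\binom{n-r-1}{u-r-1}$ independent sets of size $u$.

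Substituting these two counts into \eqref{eq:weight_dmin} and using $(q^{k-u+1}-1) - (q^{k-u}-1) = q^{k-u}(q-1)$ to merge the two contributions, I would write $W_C(Z) = P(Z) + Q(Z)$, where $P(Z) = 1 + \sum_{u=0}^{k-1}\binom{n}{u}Z^{n-u}(1-Z)^u(q^{k-u}-1)$ is, term by term, the weight enumerator of an $[n,k]$ MDS code with nominal minimum distance $d' = n-k+1 = h+3$ (the $u=k$ summand vanishes), and $Q(Z) = \sum_{u=r+1}^{2r} 2(q-1)\binom{n-r-1}{u-r-1}q^{k-u}Z^{n-u}(1-Z)^u$ is the correction coming from the non-uniform part of the matroid. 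For $P(Z)$ I would simply quote the coefficient extraction already carried out for the MDS code in Section~\ref{sec:background}, with $d_{\min}$ there replaced by $d' = h+3$: the coefficient of $Z^w$ in $P$ equals $\sum_{j=0}^{w-h-3}\binom{n}{w}\binom{w}{j}(-1)^j(q^{k+w-n-j}-1)$ for $h+3 \le w \le n$, and vanishes for $1 \le w \le h+2$. This accounts for the first summation in all three displayed formulas.

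For $Q(Z)$ I would expand $(1-Z)^u = \sum_j \binom{u}{j}(-Z)^j$ and change the summation variable to $w = n-u+j$, equivalently $u = n-w+j$, so that $q^{k-u} = q^{k+w-n-j}$, $\binom{u}{j} = \binom{n-w+j}{j}$ and $\binom{n-r-1}{u-r-1} = \binom{n-r-1}{n-w+j-r-1}$; the coefficient of $Z^w$ in $Q(Z)$ then takes exactly the form $\sum_j 2(q-1)\binom{n-r-1}{n-w+j-r-1}\binom{n-w+j}{j}(-1)^j q^{k+w-n-j}$ appearing in the theorem. The range of $j$ is dictated by the constraint $r+1 \le u \le 2r$, i.e. $w-r-h-1 \le j \le w-h-2$, intersected with $j \ge 0$ (the binomials vanish automatically outside this window): this yields $0 \le j \le w-h-2$ when $h+3 \le w \le r+h+1$, and $w-r-h-1 \le j \le w-h-2$ when $r+h+2 \le w \le n$, which are precisely the two ranges in the statement. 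Adding $P$ and $Q$ coefficientwise gives the three displayed formulas; for $w = h+2 = d_{\min}$ the $P$-part vanishes and $Q$ contributes only its $j=0$ term, which via $n-(h+2)-r-1 = r-1$ and $k+(h+2)-n = 0$ collapses to $A_{h+2} = 2(q-1)\binom{n-r-1}{r-1}$, recovering minimum distance $h+2$ as a consistency check.

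The only genuinely delicate part is the bookkeeping around $Q(Z)$: correctly pushing the constraint $r+1 \le u \le 2r$ through the substitution $u = n-w+j$ into the two distinct $j$-ranges for $w$ below and above $r+h+2$, and verifying that the binomial coefficients vanish on their own outside the nominal index set so that no boundary terms are dropped. Everything else is a direct adaptation of the MDS computation already set up in Section~\ref{sec:background}.
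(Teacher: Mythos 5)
Your proposal is correct and follows essentially the same route as the paper's proof: the identical rank/cardinality counts ($2\binom{n-r-1}{u-r-1}$ dependent sets versus $\binom{n}{u}-2\binom{n-r-1}{u-r-1}$ independent ones for $r+1\le u\le 2r$), the same merge via $q^{k-u+1}-q^{k-u}=q^{k-u}(q-1)$ into an MDS-like term plus a correction term, and the same change of variables $w=n-u+j$ yielding exactly the two $j$-ranges of the statement (the paper tabulates these ranges rather than quoting the Section~\ref{sec:background} MDS extraction, but the computation is the same).
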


\bpf We will evaluate the weight enumerator of the data-local MRC using the following expression:
\begin{equation} \label{eq:weight_poly}
W_C(Z) = 1 + \sum_{\substack{U \subseteq [n]: \\ |U| \leq n-d_{\min}}} Z^{n- |U|} (1-Z)^{|U|} (q^{k-\rho(U)} - 1).
\end{equation}
It can be observed from \eqref{eq:weight_poly} that to determine the weight enumerators of the code, we have to count the number of sets $U \subseteq S$ such that $|U| = u$ and $\rho(U) = v$ for any given pair $(u, v)$ ($0 \leq u \leq n-d_{\min}$ and $0 \leq v \leq k$). For the case of MDS code, the rank of a set is uniquely determined by the cardinality of the set. For the case of MRC, that is not true. Hence, the counting involves more cases based on the matroid structure of the code as given in Proposition \ref{prop:mrc_matroid} . For any set $U$ such that $|U| = u \leq n-d_{\min} = 2r$, we have the following cases
\vspace{0.1in}
\ben
\item For a set $U$ of size $0 \leq u \leq r$, the set is independent. Hence, rank of the set $\rho(U) = u$. The number of sets  of this type are $N_{u,0} = {n \choose u}$. 
\vspace{0.05in}
\item For a set $U$ of size $r+1 \leq u \leq 2r$, there are the following two cases possible.   
\ben
\item Let $U$ be such that either $|U \cap S_1| = r+1$ or $|U \cap S_2| = r+1$. In this case, the set $U$ is dependent. The rank of the set $\rho(U)$, which is the size of the largest independent set in $U$, is given by $\rho(U) = u-1$. The number of sets $N_{u,1}$ of this type are $N_{u,1} = 2{n-r-1 \choose u-r-1}$. This is because either of the two local codes can be chosen and the remaining $u-r-1$ coordinates can be chosen from $n-r-1$ coordinates in ${n-r-1 \choose u-r-1}$ ways.
\vspace{0.05in}
\item Let $U$ be such that $|U \cap S_i| \leq r, i = 1,2$. In this case, the set $U$ is independent and hence, rank of the set $\rho(U) = u$. The number of sets of this type are $N_{u,0} - N_{u,1} = {n \choose u} - 2{n-r-1 \choose u-r-1}$. 
\een
\een
Hence, the weight enumerator polynomial in \eqref{eq:weight_poly} can be rewritten as
\begin{eqnarray}
W_C(Z) &= &1+ \sum_{u=0}^r {n \choose u} Z^{n-u} (1-Z)^u (q^{k-u} - 1) \nonumber \\
&& + \sum_{u=r+1}^{2r} 2{n-r-1 \choose u-r-1} Z^{n-u} (1-Z)^u (q^{k-u+1} - 1) \nonumber \\
&& + \sum_{u=r+1}^{2r} \left ( {n \choose u} - 2{n-r-1 \choose u-r-1} \right ) Z^{n-u} (1-Z)^u (q^{k-u} - 1) \nonumber \\
& \stackrel{(a)}{=} & 1+  \sum_{u=0}^{2r-1} {n \choose u} Z^{n-u} (1-Z)^u (q^{k-u} - 1) \nonumber \\
&& + \sum_{u=r+1}^{2r} 2{n-r-1 \choose u-r-1} Z^{n-u} (1-Z)^u (q^{k-u+1} - q^{k-u}) \nonumber \\
& = & 1+ \sum_{u=0}^{2r-1} \sum_{j=0}^{u} {n \choose u} {u \choose j} Z^{n-u+j} (-1)^j (q^{k-u} - 1) \nonumber \\
&& + \sum_{u=r+1}^{2r} \sum_{j=0}^{u} 2{n-r-1 \choose u-r-1} {u \choose j} Z^{n-u+j} (-1)^j (q^{k-u+1} - q^{k-u}) \nonumber \\
& = & 1+ T_1 + T_2,
\end{eqnarray}
where $(a)$ follows since for the case when $u=2r=k$ and the set is full rank, the term in the summation is zero.
 By applying the change of variables $w = n-u+j, j=j$ to the above equation, we first identify the $(u,j)$ pairs which result in a given value of $w$ (Table \ref{tab:change_variable_two}).

\begin{table}[ht]
\centering
\begin{tabular}{||c|c|c|l|c||}
\hline
\hline
  Term & $w$ & No. of  & $(u,j)$ pairs & Range of $j$ \\
  in Sum & & terms & & \\
  \hline
        
 \multirow{4}{*}{$T_1$} & $n$ & $2r$ & $(0,0), (1,1), \ldots, (2r-1,2r-1)$ & \\
  \hhline{~---~}
  & $n-1$ & $2r-1$ & $(1,0), (2,1), \ldots, (2r-1, 2r-2)$ & $j=0$ to\\
  \hhline{~---~}
 & \vdots & \vdots & \vdots & $j=w-h-3$ \\
  \hhline{~---~}
& $h+3$ & $1$ & $(2r-1,0)$ & \\
\hline
\hline
  \multirow{7}{*}{$T_2$} & $n$ & $r$ & $(r+1,r+1), (r+2,r+2), \ldots, (2r,2r)$  &  \\
  \hhline{~---~}
  & $n-1$ & $r$ & $(r+1,r), (r+2,r+1), \ldots, (2r, 2r-1)$ &  $j=w-r-h-1$ \\
  \hhline{~---~}
  & \vdots & \vdots & \vdots & to $j=w-h-2$ \\
  \hhline{~---~}
  & $r+h+2$ & $r$ &  $(r+1,1), (r+2,r-1), \ldots, (2r,r)$ &  \\
  \hhline{~----}
& $r+h+1$ & $r$ &  $(r+1,0), (r+2,1), \ldots, (2r,r-1)$ & $j=0$ to \\
  \hhline{~---~}
 & \vdots & \vdots & \vdots & $j=w-h-2$ \\
  \hhline{~---~}
& $h+2$ & $1$ & $(2r,0)$ & \\
        \hline
      \hline

\end{tabular}
\vspace{0.1in}
 \caption{Illustrating the change of variables from $u,j$ to $w=n-u+j, j$ for a data-local MRC with two local codes.} \label{tab:change_variable_two}

\end{table}

With reference to the table above, the weight enumerator polynomial can be rewritten in terms of variables $w,j$ as 
\begin{eqnarray}
W_C(Z) & = & 1+ \sum_{w=h+3}^{n} \sum_{j=0}^{w-h-3} {n \choose n-w+j} {n-w+j \choose j} Z^{w} (-1)^j (q^{k+w-n-j} - 1) \nonumber \\
    & & \hspace{-0.4in}+ \sum_{w=h+2}^{r+h+1} \sum_{j=0}^{w-h-2} 2{n-r-1 \choose n-w+j-r-1} {n-w+j \choose j} Z^{w} (-1)^j (q^{k+w-n-j+1} - q^{k+w-n-j}) \nonumber \\
    & & \hspace{-0.4in} + \sum_{w=r+h+2}^{n} \sum_{j=w-r-h-1}^{w-h-2} 2{n-r-1 \choose n-w+j-r-1} {n-w+j \choose j} Z^{w} (-1)^j (q^{k+w-n-j+1} - q^{k+w-n-j}).
\end{eqnarray}

The weight enumerators in the theorem statement follow by bringing together coefficients with the same power of $Z$ and by using the fact that ${n \choose n-w+j} {n-w+j \choose j} = {n \choose w} {w \choose j}$.
\epf

\begin{example}
Consider the local reconstruction code (LRC) discussed in \cite{HuaSimXu_etal_azure} with two local codes and two global parities. The parameters of the code are given by $k = 6, r = 3, h = 2$ and $n=10, d_{\min} = h+2 = 4$. The code employed in \cite{HuaSimXu_etal_azure} is an MRC over a finite field of size $q=16$. Thus, the weight enumerator polynomial of the code can be obtained by applying Theorem \ref{thm:weight_two_codes}.
\begin{equation}
W_C(Z) = 1 + 450 Z^4 + 3960 Z^5 + 34680 Z^6 + 304080Z^7 + 1782405Z^8 + 5847480Z^9 + 8804160 Z^{10}.
\end{equation}
\end{example}

The weight enumerator polynomial of the dual code $\mathcal{C}^\perp$ can be determined in terms of weight enumerator polynomial of the code $\mathcal{C}$ using MacWilliams identity \cite{MacSlo}.
\begin{eqnarray}
W_{C^\perp}(Z) & = & \frac{ (1 + (q-1)Z)^n}{|C|} W_C \left ( \frac{1-Z}{1+(q-1)Z} \right ) \nonumber \\
& \stackrel{(a)}{=} & \frac{ (1 + (q-1)Z)^n}{q^k} \sum_{U \subseteq [n]} \left (   \frac{1-Z}{1+(q-1)Z} \right )^{n-|U|} \left ( 1-\frac{1-Z}{1+(q-1)Z} \right )^{|U|} q^{k - \rho(U)} \nonumber \\
& = & \sum_{U \subseteq [n]}  Z^{|U|} (1-Z)^{n-|U|} q^{|U| - \rho(U)} \nonumber \\
& \stackrel{(b)}{=} & 1+ \sum_{\substack{U \subseteq [n]: \\ |U| \geq d_{\min}^{\perp}}}  Z^{|U|} (1-Z)^{n-|U|} (q^{|U| - \rho(U)}-1),
\end{eqnarray}
where $(a)$ follows from \eqref{eq:weight_set_sum} and $(b)$ can be shown similar to  \eqref{eq:weight_dmin}, since the minimum distance of the dual code is $d_{\min}^{\perp}$.

\begin{thm}
Consider an $[n,k,d_{\min}]$ data-local MRC $\mathcal{C}$ with locality $r$, where $k = 2r$, $n = k + \frac{k}{r} + h$ and $d_{\min} = h+2$. Let the supports of the two local groups be $\{ S_1, S_2 \}$. The weight enumerators of the dual code $\mathcal{C}^\perp$ are given by
\begin{eqnarray}
A_w & = &\sum_{j=0}^{w-r-1} 2 {n-w+j \choose j}{n-r-1 \choose w-j-r-1} (-1)^j (q-1) \ \ \ \ \ r+1 \leq w \leq 2r, \\
A_w & = &\sum_{j=0}^{w-2r-1} {n-w+j \choose j}{n \choose w-j} (-1)^j (q^{w-j-k}-1) \nonumber \\
&&+ \sum_{j=w-2r}^{w-r-1} 2 {n-r-1 \choose w-j-r-1} {n-w+j \choose j} (-1)^j (q-1), \ \ 2r+1 \leq w \leq n.
\end{eqnarray}
\end{thm}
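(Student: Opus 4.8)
The plan is to follow the pattern of the proof of Theorem~\ref{thm:weight_two_codes}, but now applied to the MacWilliams-derived identity
\[
W_{C^\perp}(Z) = 1 + \sum_{\substack{U \subseteq [n]:\\ |U| \ge d_{\min}^{\perp}}} Z^{|U|}(1-Z)^{n-|U|}\,(q^{|U|-\rho(U)} - 1)
\]
established immediately above the statement. As before, everything reduces to counting, for each pair $(u,v)$ with $u \ge d_{\min}^{\perp}$ and $v < u$, the number $N_{u,v}$ of sets $U$ with $|U| = u$ and $\rho(U) = v$; only \emph{dependent} sets $U$ contribute, since an independent $U$ gives $q^{0}-1 = 0$. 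The first step is to identify $d_{\min}^{\perp}$. By Proposition~\ref{prop:mrc_matroid} the two local supports satisfy $|S_1| = |S_2| = r+1$, every subset of size at most $r$ is independent, and each $S_i$ is dependent; hence the smallest dependent set of $M(\mathcal{C})$ has size $r+1$, and because a set is dependent in $M(\mathcal{C})$ exactly when it contains the support of a nonzero codeword of $\mathcal{C}^{\perp}$, we get $d_{\min}^{\perp} = r+1$.

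Next I would carry out the case analysis, which here is short. Write $u_i = |U \cap S_i|$ for $i = 1,2$ and let $u_0$ be the number of coordinates of $U$ among the $h$ global parities, so $|U| = u_0 + u_1 + u_2$. Proposition~\ref{prop:mrc_matroid} gives $\rho(U) = \min\{k,\ |U| - O_1 - O_2\}$ with $O_i := \max(u_i - r, 0) \in \{0,1\}$ (the range being forced by $u_i \le |S_i| = r+1$), so the nullity $|U| - \rho(U)$ equals $\max\{|U|-k,\ O_1+O_2\}$. For $r+1 \le |U| \le 2r$ this is exactly $1$, attained precisely when exactly one of $S_1, S_2$ is contained in $U$ (both cannot fit, as $|S_1|+|S_2| = 2r+2 > 2r$, and if neither is contained then $U$ is independent), which gives $N_{|U|,|U|-1} = 2\binom{n-r-1}{|U|-r-1}$: once all $r+1$ elements of the chosen $S_i$ are in $U$ the remaining $|U|-r-1 \le r-1$ elements are drawn freely from the other $n-r-1$ coordinates without ever completing the second local group. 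For $|U| \ge 2r+1$ every $|U|$-subset has rank exactly $k = 2r$, so $N_{|U|,k} = \binom{n}{|U|}$ and the nullity is $|U|-k$. Substituting these two counts into the identity, expanding $(1-Z)^{n-|U|} = \sum_j \binom{n-|U|}{j}(-Z)^j$, and setting $w = |U| + j$ exactly as in Table~\ref{tab:change_variable_two}, the large sets ($u = w-j \ge 2r+1$) produce the $\binom{n-w+j}{j}\binom{n}{w-j}(q^{w-j-k}-1)$ terms and the nullity-$1$ sets ($r+1 \le u = w-j \le 2r$) produce the $2(q-1)\binom{n-r-1}{w-j-r-1}\binom{n-w+j}{j}$ terms; collecting coefficients of $Z^w$ gives the stated formulas, with no binomial identity needed since the expansion already uses $\binom{n-|U|}{j}$ rather than $\binom{|U|}{j}$.

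The only genuinely delicate point is the index bookkeeping in the change of variables: one has to check that the $j$-ranges $0 \le j \le w-2r-1$ and $w-2r \le j \le w-r-1$ are exactly the images of the $u$-intervals $[2r+1, w]$ and $[r+1, 2r]$ under $u = w-j$, that these are disjoint (so the boundary value $u = 2r+1$ is counted only once, inside the $\binom{n}{u}$ family), and that for $r+1 \le w \le 2r$ only the second family survives with $0 \le j \le w-r-1$. This is the same kind of argument summarized by Table~\ref{tab:change_variable_two} in the primal case, so I would simply reproduce that table for the dual. As a sanity check, the resulting formula gives $A_{r+1} = 2(q-1)$, matching the $2(q-1)$ scalar multiples of the two local parity checks, each of weight exactly $r+1$.
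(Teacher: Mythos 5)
Your proposal is correct and follows exactly the route the paper intends: the paper's own proof is a one-line deferral to ``the same counting as in the proof of Theorem~\ref{thm:weight_two_codes},'' and you carry out precisely that counting (nullity $0$ or $1$ for $r+1\le|U|\le 2r$ with $2\binom{n-r-1}{|U|-r-1}$ sets of nullity $1$, and $\rho(U)=k$ for $|U|\ge 2r+1$) inside the MacWilliams-derived dual identity, with the correct identification $d_{\min}^{\perp}=r+1$ and the correct change of variables $w=|U|+j$. The only difference is presentational -- your closed-form nullity expression $\max\{|U|-k,\,O_1+O_2\}$ and the weight-$(r+1)$ sanity check are details the paper omits -- so no further comment is needed.
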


\bpf
The proof follows from the same counting as in the proof of Theorem \ref{thm:weight_two_codes}.

\epf

\subsection{Weight Enumerators of Data-Local MRC with $\ell \geq 3$ Local Codes}

\begin{thm}

Consider an $[n,k,d_{\min}]$ data-local MRC with locality $r$, where $k = \ell r$, $n = k + \frac{k}{r} + h = k+\ell + h$, $\ell < r+1$ and $d_{\min} = h+2$. Let the supports of the local groups be $\{ S_1, \ldots, S_\ell \}$. The weight enumerators of the code are given in Table \ref{tab:weight_enum_ell}, where  $N_{u,b}$ = ${\ell \choose b} {n-b(r+1) \choose u-b(r+1)}$, $0 \leq u \leq n-d_{\min} = \ell r + \ell -2$. $N_{u,b} = 0$ if $b(r+1) > u$.

 \begin{table}[ht]
\centering
\begin{tabular}{||c|l||} \hline
   \hline 
   & \\
  $w$ & $\ \ \ \ \ \ \ \ \ \ \ \ \ \ \ \ \ \ \ \ \ \ \ \ \ \  A_w$  \\
  &     \\
  \hline
  & \\
 $ n-\ell r - \ell + 2 \leq w $ & $\sum_{j=0}^{w - h -2} \sum_{b=1}^{w-j-h-1} \left ( N_{n-w+j,b+1} - N_{n-w+j,b+2}   \right ) {n-w+j \choose j} (-1)^j (q^b-1)$ \\
 & \\
   $\leq n- \ell r -1$ & \\
 & \\
  \hline
  & \\
  $n-\ell r \leq w$ &  $\sum_{j=w-n+\ell r + 1}^{w - h -2} \sum_{b=1}^{w-j-h-1} \left ( N_{n-w+j,b+1} - N_{n-w+j,b+2}   \right ) {n-w+j \choose j}  (-1)^j (q^b-1)$ \\
  & \\
  $ \leq n- (\ell-1)(r+1)$ & $+ \sum_{j=0}^{w-n+\ell r}  \sum_{b=0}^{\ell-1} \left ( N_{n-w+j,b} - N_{n-w+j,b+1}   \right ) {n-w+j \choose j} (-1)^j (q^{k-(n-w+j-b)} - 1)$ \\
  & \\
  \hline
  & \\
  $n- (\ell-1)(r+1) + 1 \leq w$ &  $\sum_{j=w-n  +\ell r + 1}^{w - h -2} \sum_{b=1}^{w-j-h-1} \left ( N_{n-w+j,b+1} - N_{n-w+j,b+2}   \right ) {n-w+j \choose j}  (-1)^j (q^b-1)$ \\
  & \\
 $ \leq n- (\ell-2)(r+1)$ & $+ \sum_{j=w-n+(\ell-1)(r+1)}^{w-n+\ell r}  \sum_{b=0}^{\ell-1} \left ( N_{n-w+j,b} - N_{n-w+j,b+1}   \right ) {n-w+j \choose j} (-1)^j (q^{k-(n-w+j-b)} - 1)$ \\
 & \\
 & $+  \sum_{j=0}^{w-n+(\ell-2)(r+1)+r} \sum_{b=0}^{\ell-2}  \left ( N_{n-w+j,b} - N_{n-w+j,b+1}   \right ) {n-w+j \choose j}  (-1)^j (q^{k-(n-w+j-b)} - 1)$ \\
 & \\
  \hline
  & \\
$n- (i+1)(r+1) + 1 \leq w$  &  $\sum_{j=w- n + \ell r + 1}^{w - h -2} \sum_{b=1}^{w-j-h-1} \left ( N_{n-w+j,b+1} - N_{n-w+j,b+2}   \right ) {n-w+j \choose j}(-1)^j (q^b-1)$ \\
& \\
$\leq n- i(r+1), \ \  0 \leq i \leq \ell-3$ & $+ \sum_{j=w-n+(\ell-1)(r+1)}^{w-n+\ell r}  \sum_{b=0}^{\ell-1} \left ( N_{n-w+j,b} - N_{n-w+j,b+1}   \right ) {n-w+j \choose j}  (-1)^j (q^{k-(n-w+j-b)} - 1)$ \\
& \\
& $+ \sum_{j=w-n+(i+1)(r+1)}^{w-n+(i+1)(r+1)+r} \sum_{b=0}^{i+1}  \left ( N_{n-w+j,b} - N_{n-w+j,b+1}   \right ) {n-w+j \choose j}  (-1)^j (q^{k-(n-w+j-b)} - 1)$ \\
& \\
& $+ \sum_{j=0}^{w-n+i(r+1)+r} \sum_{b=0}^{i}  \left ( N_{n-w+j,b} - N_{n-w+j,b+1}   \right ) {n-w+j \choose j} (-1)^j (q^{k-(n-w+j-b)} - 1)$ \\
& \\
  \hline
      \hline
\end{tabular}
\vspace{0.1in}
 \caption{Weight enumerators of data-local MRC with $\ell \geq 3$ local codes.} \label{tab:weight_enum_ell}

\end{table}

\end{thm}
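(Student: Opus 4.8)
The plan is to apply the master formula \eqref{eq:weight_dmin} and reduce everything to counting, for each size $u$ with $0\le u\le n-d_{\min}$, how many sets $U\subseteq[n]$ of that size have each given value of $\rho(U)$, and then to carry out the change of variables $w=n-|U|+j$ exactly as in Table~\ref{tab:change_variable_two} but over more regimes. First I would pin down the rank function via Proposition~\ref{prop:mrc_matroid}. Writing $b(U):=|\{\,i:|U\cap S_i|=r+1\,\}|$ for the number of saturated local groups of $U$, an independent subset of $U$ keeps at most $r$ coordinates of each $S_i$ and at most $k$ coordinates overall, hence omits at least one coordinate of every saturated group; conversely, deleting exactly one coordinate per saturated group leaves an independent set of size $u-b(U)$ whenever $u-b(U)\le k$. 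Therefore $\rho(U)=\min\big(k,\,u-b(U)\big)$. Since $u\le n-d_{\min}=\ell(r+1)-2<\ell(r+1)$, no set in range is fully saturated, so $0\le b(U)\le\ell-1$ (and $b(U)\le\lfloor u/(r+1)\rfloor$), and a set contributes to \eqref{eq:weight_dmin} only when $u-b(U)<k$, since otherwise $q^{k-\rho(U)}-1=0$.

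For the count I would use $N_{u,t}=\binom{\ell}{t}\binom{n-t(r+1)}{u-t(r+1)}$, which is the number of pairs consisting of a $t$-subset $T\subseteq\{S_1,\dots,S_\ell\}$ and a size-$u$ set $U\supseteq\bigcup_{i\in T}S_i$; grouping by the saturated groups of $U$ yields $N_{u,t}=\sum_{|U|=u}\binom{b(U)}{t}$. Using $q^{\,k-\rho(U)}-1=(q^{k-u}-1)+q^{k-u}\sum_{t\ge1}\binom{b(U)}{t}(q-1)^t$ when $u-b(U)\le k$, summing over all $U$ of size $u$, and invoking the identity, the contribution of the size-$u$ sets to $W_C(Z)$ becomes
\[
\Big(\binom{n}{u}(q^{k-u}-1)+q^{k-u}\sum_{t\ge1}(q-1)^t N_{u,t}\Big)Z^{n-u}(1-Z)^u,
\]
where terms coming from sets with $u-b(U)>k$ are to be dropped, which only trims the inner sum when $u>k$. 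Equivalently, pass first to the exact counts $M_{u,b}=\sum_{c\ge b}(-1)^{c-b}\binom{c}{b}N_{u,c}$ of sets with exactly $b$ saturated groups and write the contribution as $\sum_b M_{u,b}(q^{k-u+b}-1)$; a routine regrouping, telescoping powers of $q$ and collecting like terms, then recasts this in the form displayed in Table~\ref{tab:weight_enum_ell}, with differences $N_{u,b}-N_{u,b+1}$ and factors of type $q^{b}-1$ or $q^{k-(u-b)}-1$ according to whether $u>k$ or $u\le k$.

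It then remains to collect powers of $Z$. Substituting $(1-Z)^u=\sum_j\binom{u}{j}(-Z)^j$ and setting $w=n-u+j$ (so $u=n-w+j$ and $\binom{u}{j}=\binom{n-w+j}{j}$), $A_w$ is the sum over all admissible $(j,t)$ of the coefficients above with $u$ replaced by $n-w+j$. For fixed $w$ the admissible $j$ are constrained by (i) $0\le j\le w-d_{\min}$; (ii) whether $n-w+j\lessgtr k$, which decides whether the cap $\rho=k$ is active, hence whether the relevant factor is $q^{b}-1$ (when $u>k$) or $q^{k-(u-b)}-1$ (when $u\le k$); and (iii) the saturation thresholds $n-w+j\ge t(r+1)$, i.e.\ $j\ge w-n+t(r+1)$, one for each relevant $t$. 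Arranging the resulting breakpoints in increasing order of $w$ — namely $w=n-\ell r-1$, $w=n-\ell r$, and $w=n-i(r+1)$ for $i=\ell-1,\dots,0$ — partitions $[d_{\min},n]$ into precisely the intervals of Table~\ref{tab:weight_enum_ell}: the top row is the window in which every contributing $U$ has $u>k$, the next is the window bounded below by $w=n-k$, and the remaining rows, for $i=\ell-2,\dots,0$, are the windows $n-(i+1)(r+1)+1\le w\le n-i(r+1)$, on each of which one further saturation level becomes reachable. Finally the binomials are rewritten via $\binom{n}{n-w+j}\binom{n-w+j}{j}=\binom{n}{w}\binom{w}{j}$ and the analogous identity for the $N_{u,t}$ to reach the stated expressions.

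The hard part is the bookkeeping in this last step: one must verify that the breakpoints above occur in the stated order and that on each resulting window the admissible $(j,t)$ are exactly those tabulated — no extras and no omissions. This is exactly where the hypothesis $\ell<r+1$ is used: it is equivalent to $n-k=\ell+h\le r+h<r+h+1=n-(\ell-1)(r+1)$, which is what makes the second window nonempty and forces the windows to appear in the order of the table; without it the case split degenerates or reorders. A secondary subtlety is the boundary $u=k$, where $b(U)=0$ forces $\rho(U)=k$ so that contribution genuinely vanishes and the minimum-weight formula is of pure $q^{b}-1$ type with $b\ge1$. Individually each check is routine; the work is in carrying them all out uniformly in $\ell$ and over every boundary.
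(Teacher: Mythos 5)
Your setup is sound and in fact more careful than the paper's own argument: you correctly identify $\rho(U)=\min\bigl(k,\,|U|-b(U)\bigr)$, and your observation that $N_{u,t}=\sum_{|U|=u}\binom{b(U)}{t}$, so that the exact counts are recovered by M\"obius inversion $M_{u,b}=\sum_{c\ge b}(-1)^{c-b}\binom{c}{b}N_{u,c}$, is the right way to organize the count. The paper instead reads $N_{u,b}$ as ``the number of size-$u$ sets containing at least $b$ full local groups'' and takes $N_{u,b}-N_{u,b+1}$ to be the exact number of sets of rank $u-b$; that is what produces the differences appearing in Table~\ref{tab:weight_enum_ell}. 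The genuine gap in your proposal is the sentence asserting that ``a routine regrouping \dots recasts this in the form displayed in Table~\ref{tab:weight_enum_ell}.'' That identity is false as soon as a set can saturate two or more local groups, which is exactly the new phenomenon at $\ell\ge 3$ (for $u\ge 2(r+1)$). Indeed $N_{u,b}-N_{u,b+1}=\sum_{c}\bigl(\binom{c}{b}-\binom{c}{b+1}\bigr)M_{u,c}$, and for $b=1$, $c=2$ the coefficient is $\binom{2}{1}-\binom{2}{2}=1$: a set with exactly two saturated groups is counted once among the alleged rank-$(u-1)$ sets \emph{and} once among the rank-$(u-2)$ sets. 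Concretely, for $\ell=3$, $r=3$, $h=0$ (so $n=12$, $k=9$) and $u=8$, the true rank distribution over ranks $(8,7,6)$ is $(288,204,3)$, whereas the table's counting gives $(285,207,3)$; the corresponding contributions to $W_C(Z)$ differ by $3q(q-1)Z^{4}(1-Z)^{8}\ne 0$.

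Carried out honestly, your derivation terminates in the per-size contribution $\binom{n}{u}(q^{k-u}-1)+q^{k-u}\sum_{t\ge1}(q-1)^{t}N_{u,t}$ (suitably truncated when $u>k$), which differs from $\sum_{b}\bigl(N_{u,b}-N_{u,b+1}\bigr)\bigl(q^{k-(u-b)}-1\bigr)$ by $q^{k-u}(q-1)\sum_{b\ge2}N_{u,b}\bigl((q-1)^{b-1}-q^{b-1}\bigr)$, which is nonzero whenever some $N_{u,b}\ne0$ with $b\ge2$. So you cannot complete the proof of the statement as tabulated: either you must supply the (false) algebraic identity, or you must replace the coefficients $N_{u,b}-N_{u,b+1}$ in the table by the M\"obius-inverted counts $\sum_{c\ge b}(-1)^{c-b}\binom{c}{b}N_{u,c}$ and restate the theorem. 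The remainder of your plan --- the change of variables $w=n-u+j$, the ordering of the breakpoints, and the role of $\ell<r+1$ in guaranteeing $(\ell-1)(r+1)\le\ell r$ --- is fine and agrees with the paper; note also that for $\ell=2$ no set in range can saturate both groups, which is why this discrepancy does not arise in Theorem~\ref{thm:weight_two_codes}.
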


\bpf
As in the case of proof of Theorem \ref{thm:weight_two_codes}, we will evaluate the weight enumerator of the data-local MRC using the following expression:
\begin{equation}
W_C(Z) = 1 + \sum_{\substack{U \subseteq [n]: \\ |U| \leq n-d_{\min}}} Z^{n- |U|} (1-Z)^{|U|} (q^{k-\rho(U)} - 1).
\end{equation}
We will count the number of sets $U \subseteq S$ such that $|U| = u$ and $\rho(U) = v$ for any given pair $(u, v)$ ($0 \leq u \leq n-d_{\min}$ and $0 \leq v \leq k$) based on the matroid structure of the code as given in Proposition \ref{prop:mrc_matroid} . For any set $U$ such that $|U| = u \leq n-d_{\min} = k+ \ell -2 = \ell r + \ell -2$, we have the following cases
\vspace{0.1in}
\ben 
\item $u \leq k$, in which case the maximum rank of the set possible is $u$. For this case, the count of the number of sets with a given cardinality $u$ and rank $v$ is given in Table \ref{tab:ugeqk_datalocal}. The condition $\ell < r+1$ is required to ensure that $(\ell-1)(r+1) \leq \ell r$ in the table.
\item $u > k$, in which case the maximum rank of the set possible is $k$. For this case, the corresponding count is given in Table \ref{tab:ugtk_datalocal}. 
\een

{\small
\begin{table}[ht]
\centering
\begin{tabular}{||c|c|c|c||} \hline
  \hline
    &  &  & \\
  $|U|=u$ & Possible $\rho(U)=v$ & No. of sets $U$ & Remarks \\
  &  &  & \\
  \hline
    &  &   & \\
  $0 \leq u \leq r $ & $u$ & ${n \choose u}$ &  \\
    &  &  &  \\
  \hline
    &  &  &  Total number of sets minus \\
  \multirow{3}{*}{$r+1 \leq u \leq 2r+1$} & $u$ & $ {n \choose u} - {\ell \choose 1} {n-r-1 \choose u-r-1} $&  the number of sets containing \\
      &  &  & at least one local code \\
  \hhline{~---}
      &  &  & Number of sets containing exactly  \\
 &  $u-1$ & ${\ell \choose 1} {n-r-1 \choose u-r-1}$ &  one local code. $r+1$ elements of  \\
        &  &  &   one local code  chosen in $\ell$ ways. \\
        & & & Remaining $u-r-1$ elements \\
        & & &  picked from $n-r-1$ elements \\
\hline
  \vdots & \vdots & \vdots & \\
  \hline
        &  &  & Total number of sets minus  \\

  \multirow{3}{*}{$j(r+1) \leq u \leq (j+1)r+j$} & $u$ & ${n \choose u} - {\ell \choose 1} {n-r-1 \choose u-r-1}$ & the number of sets containing \\
        &  &  & at least one local code \\
  \hhline{~---}
  $2 \leq j \leq \ell -2$    &  &  & Number of sets containing at least   \\
 & $u-1$ & $ {\ell \choose 1} {n-r-1 \choose u-r-1} - {\ell \choose 2} {n-2(r+1) \choose u-2(r+1)}$ & one local code minus those  \\
        &  &  &   containing at least two local codes\\
          \hhline{~---}

& \vdots & \vdots  & \\

  \hhline{~---}
        &  &  & Number of sets containing at least  \\
 & $u-j-1$ & $ {\ell \choose j-1} {n-(j-1)(r+1) \choose u-(j-1)(r+1)} - {\ell \choose j} {n-j(r+1) \choose u-j(r+1)} $ & $j-1$ local codes minus number of  \\

        &  &   & sets containing at least $j$ local codes \\

  \hhline{~---}
        &  & & Number of sets containing exactly  \\
 & $u-j$ & $ {\ell \choose j} {n-j(r+1) \choose u-j(r+1)}$ & $j$ local codes. $j(r+1)$ elements of  \\
&  &  &   $j$ local codes  chosen in ${\ell \choose j}$ ways. \\
        & & & Remaining $u-j(r+1)$ elements \\
        & & &  picked from $n-j(r+1)$ elements \\
  \hline
\vdots & \vdots & \vdots & \\
  \hline

  \multirow{3}{*}{$(\ell-1)(r+1) \leq u \leq \ell r$} & $u$ & ${n \choose u} - {\ell \choose 1} {n-r-1 \choose u-r-1}$ & \\
  \hhline{~---}
 & $u-1$ & $ {\ell \choose 1} {n-r-1 \choose u-r-1} - {\ell \choose 2} {n-2(r+1) \choose u-2(r+1)}$ & \\
          \hhline{~---}

& \vdots & \vdots & \\

  \hhline{~---}
 & $u-(\ell-2)$ & $ {\ell \choose \ell-2} {n-(\ell-2)(r+1) \choose u-(\ell-2)(r+1)} - {\ell \choose \ell-1} {n-(\ell-1)(r+1) \choose u-(\ell-1)(r+1)} $ & \\


  \hhline{~---}
 & $u-(\ell-1)$ & $ {\ell \choose \ell -1} {n-(\ell-1)(r+1) \choose u-(\ell-1)(r+1)}$ & \\

  \hline
      \hline
\end{tabular}
\vspace{0.1in}
 \caption{Count of the number of sets $U \subseteq S$ such that $|U| = u$ and $\rho(U) = v$ for $0 \leq u \leq k=\ell r$.} \label{tab:ugeqk_datalocal}

\end{table}

}

\begin{table}[ht]
\centering
\begin{tabular}{||c|c|c|c||} \hline
  \hline
    &  &  &  \\
  $|U|=u$ & Possible $\rho(U)=v$ & No. of sets $U$ & Remarks \\
  &  &  &  \\
  \hline
    &  &   & Rank drops by $1$ when \\
   \multirow{3}{*}{$\ell r +1$} & $k-1$ & $ {\ell \choose 2} {n-2(r+1) \choose u-2(r+1)} -  {\ell \choose 3} {n-3(r+1) \choose u-3(r+1)}$ & exactly two local codes are  \\
    &  &  & included in the set \\
  \hhline{~---}
     & \vdots & \vdots & \\

    \hhline{~---}
        &  &  &  Rank drops by $\ell-2$ when \\

    & $k-(\ell-2)$ & $ {\ell \choose (\ell-1)} {n-(\ell-1)(r+1) \choose u-(\ell-1)(r+1)}$ & exactly $\ell-1$ local codes are \\
        &  & & included in the set  \\

  \hline
    \vdots & \vdots & \vdots &  \\
    \hline
        &  &  & Rank drops by $1$ when \\

       \multirow{3}{*}{$\ell r +j$} & $k-1$ & $ {\ell \choose j+1} {n-(j+1)(r+1) \choose u-(j+1)(r+1)} -  {\ell \choose j+2} {n-(j+2)(r+1) \choose u-(j+2)(r+1)}$ & exactly $j+1$ local codes are\\
    &  &  & included in the set \\
  \hhline{~---}
     & \vdots & \vdots & \\

    \hhline{~---}
        &  &  & Rank drops by $\ell-1-j$ when \\

    & $k-(\ell-1-j)$ & $ {\ell \choose \ell-1} {n-(\ell-1)(r+1) \choose u-(\ell-1)(r+1)}$  & exactly $\ell - 1$ local codes are\\
        &  &   & included in the set \\

    \hline

        \vdots & \vdots & \vdots & \\

  \hline

    &  &   & Rank drops by $1$ when \\
  $\ell r + \ell - 2$ & $k-1$ & $ {\ell \choose \ell - 1} {n-(\ell-1)(r+1) \choose u-(\ell-1)(r+1)}$ & exactly $\ell-1$ local codes are \\
      &  &  &  included in the set \\
  
\hline
      \hline
\end{tabular}
\vspace{0.1in}
 \caption{Count of the number of sets $U \subseteq S$ such that $|U| = u$ and $\rho(U) = v$ for $k+1 \leq u \leq k+ \ell -2$. We have not included the count corresponding to rank $k$ as that term reduces to zero.} \label{tab:ugtk_datalocal}

\end{table}

Let $N_{u,b}$ = ${\ell \choose b} {n-b(r+1) \choose u-b(r+1)}$, $0 \leq u \leq n-d_{\min} = \ell r + \ell -2$. $N_{u,b} = 0$ if $b(r+1) > u$.
The weight enumerator polynomial of data-local MRC with $\ell$ local codes can be rewritten as
{\small
\begin{eqnarray*}
W_C(Z) &= &1+ \sum_{i=0}^{\ell-2} \sum_{u=i(r+1)}^{i(r+1)+r} \sum_{b=0}^{i}  \left ( N_{u,b} - N_{u,b+1}   \right ) Z^{n-u} (1-Z)^u (q^{k-(u-b)} - 1) \\
&& + \sum_{u=(\ell-1)(r+1)}^{\ell r}  \sum_{b=0}^{\ell-1} \left ( N_{u,b} - N_{u,b+1}   \right ) Z^{n-u} (1-Z)^u (q^{k-(u-b)} - 1) \\
& & + \sum_{u=\ell r+1}^{\ell r+\ell-2} \sum_{b=1}^{\ell-1-u+k} \left ( N_{u,b+1} - N_{u,b+2}   \right ) Z^{n-u} (1-Z)^u (q^b-1).
\end{eqnarray*}
}
{\small
\begin{eqnarray*}
W_C(Z) & = & 1+ \sum_{i=0}^{\ell-2} \sum_{u=i(r+1)}^{i(r+1)+r} \sum_{j=0}^u \sum_{b=0}^{i}  \left ( N_{u,b} - N_{u,b+1}   \right ) {u \choose j} Z^{n-u+j} (-1)^j (q^{k-(u-b)} - 1) \\
&& + \sum_{u=(\ell-1)(r+1)}^{\ell r} \sum_{j=0}^u  \sum_{b=0}^{\ell-1} \left ( N_{u,b} - N_{u,b+1}   \right ) {u \choose j} Z^{n-u+j} (-1)^j (q^{k-(u-b)} - 1) \\
& & + \sum_{u=\ell r+1}^{\ell r+\ell-2} \sum_{j=0}^u \sum_{b=1}^{\ell-1-u+k} \left ( N_{u,b+1} - N_{u,b+2}   \right ) {u \choose j} Z^{n-u+j} (-1)^j (q^b-1) \\
& = & 1 + \sum_{i=0}^{\ell-2} T_{1i} + T_2 + T_3.
\end{eqnarray*}
}

 By applying the change of variables $w = n-u+j, j=j$ to the above equation, we first identify the $(u,j)$ pairs which result in a given value of $w$ (Table \ref{tab:change_variable_three}).

 \footnotesize{
 \begin{eqnarray*}
W_C(Z) & = & 1+ \sum_{i=0}^{\ell-2} \sum_{w=n-i(r+1)-r}^{n-i(r+1)} \sum_{j=0}^{w-n+i(r+1)+r} \sum_{b=0}^{i}  \left ( N_{n-w+j,b} - N_{n-w+j,b+1}   \right ) {n-w+j \choose j} Z^{w} (-1)^j (q^{k-(n-w+j-b)} - 1) \\
&& + \sum_{i=0}^{\ell-2} \sum_{w=n-i(r+1)+1}^{n} \sum_{j=w-n+i(r+1)}^{w-n+i(r+1)+r} \sum_{b=0}^{i}  \left ( N_{n-w+j,b} - N_{n-w+j,b+1}   \right ) {n-w+j \choose j} Z^{w} (-1)^j (q^{k-(n-w+j-b)} - 1) \\
&& + \sum_{w=n-\ell r}^{n-(\ell-1)(r+1)} \sum_{j=0}^{w-n+\ell r}  \sum_{b=0}^{\ell-1} \left ( N_{n-w+j,b} - N_{n-w+j,b+1}   \right ) {n-w+j \choose j} Z^{w} (-1)^j (q^{k-(n-w+j-b)} - 1) \\
& & + \sum_{w=n-(\ell-1)(r+1)+1}^{n} \sum_{j=w-n+(\ell-1)(r+1)}^{w-n+\ell r}  \sum_{b=0}^{\ell-1} \left ( N_{n-w+j,b} - N_{n-w+j,b+1}   \right ) {n-w+j \choose j} Z^{w} (-1)^j (q^{k-(n-w+j-b)} - 1) \\
& & + \sum_{w=n-\ell r-\ell+2}^{n-\ell r -1} \sum_{j=0}^{w - n +\ell r + \ell -2} \sum_{b=1}^{\ell-1-n+w-j+k} \left ( N_{n-w+j,b+1} - N_{n-w+j,b+2}   \right ) {n-w+j \choose j} Z^{w} (-1)^j (q^b-1) \\
& & + \sum_{w=n-\ell r}^{n} \sum_{j=w-n+\ell r + 1}^{w - n +\ell r + \ell -2} \sum_{b=1}^{\ell-1-n+w-j+k} \left ( N_{n-w+j,b+1} - N_{n-w+j,b+2}   \right ) {n-w+j \choose j} Z^{w} (-1)^j (q^b-1).
\end{eqnarray*}

 }
 
 \normalsize
 The weight enumerators in the theorem statement follow by bringing together coefficients with the same power of $Z$ and by noting that $n-\ell r - \ell + 2 = h+2$ and $\ell - 1-n+w-j+k = w-j-h-1$.

  \begin{table}[h]
\centering
\begin{tabular}{||c|c|c|l|c||} \hline
   \hline
  Term & $w$ & No. of  & $(u,j)$ pairs & Range of $j$ \\
  in Sum & & terms & & \\
  \hline
  \multirow{9}{*}{$T_{1i}$} & $n$ & $r+1$ & $(i(r+1),i(r+1)), \ldots, $ &  \\
  & & & $\ \ \ \ \ \ \ \ \ \ \ (i(r+1)+r,i(r+1)+r)$ &  $j=w-n+i(r+1)$ \\
  \hhline{~---~}
  & $n-1$ & $r+1$ & $(i(r+1),i(r+1)-1),  \ldots, $ &  \\
    & & & $\ \ \ \ \ \ \ \ \ \ \ (i(r+1)+r, i(r+1)+r-1)$ & to $j=w-n+i(r+1)+r$ \\
\hhline{~---~}
  & \vdots & \vdots & \vdots &  \\
  \hhline{~---~}
  & $n-i(r+1)+1$ & $r+1$ &  $(i(r+1),1),  \ldots, (i(r+1)+r,r+1)$ &  \\
  \hhline{~----}
& $n-i(r+1)$ & $r+1$ &  $(i(r+1),0),  \ldots, (i(r+1)+r,r)$ & $j=0$ to \\
  \hhline{~---~}
 & \vdots & \vdots & \vdots & $j=w-n+i(r+1)+r$ \\
  \hhline{~---~}
& $n-i(r+1)-r$ & $1$ & $(i(r+1)+r,0)$ & \\
        \hline
      \hline
      
 \multirow{7}{*}{$T_2$} & $n$ & $r+2-\ell$ & $((\ell-1)(r+1),(\ell-1)(r+1)), \ldots, (\ell r, \ell r)$ & \\
  \hhline{~---~}
  & $n-1$ & $r+2-\ell$ & $((\ell-1)(r+1),(\ell-1)(r+1)-1),  \ldots, (\ell r, \ell r - 1)$ & $j=w-n+(\ell-1)(r+1)$ \\
  \hhline{~---~}
 & \vdots & \vdots & \vdots & to $j=w-n+\ell r$ \\
  \hhline{~---~}
  & $n-(\ell-1)(r+1)+1$ & $r+2-\ell$ &  $((\ell-1)(r+1),1),  \ldots, (\ell r,r+2-\ell)$ & \\
  \hhline{~----}
& $n-(\ell-1)(r+1)$ & $r+2-\ell$ &  $((\ell-1)(r+1),0),  \ldots, (\ell r,r+1-\ell)$ & $j=0$ to \\
  \hhline{~---~}
  & \vdots & \vdots & \vdots &  $j=w-n+\ell r $ \\
  \hhline{~---~}
& $n-\ell r$ & $1$ & $(\ell r,0)$ & \\
        \hline
      \hline
   \multirow{7}{*}{$T_3$} & $n$ & $\ell-2$ & $(\ell r +1,\ell r +1), \ldots, (\ell r + \ell - 2, \ell r + \ell - 2)$ & \\
  \hhline{~---~}
  & $n-1$ & $\ell-2$ & $(\ell r +1,\ell r),  \ldots, (\ell r + \ell - 2, \ell r + \ell - 3)$ & $j=w-n+\ell r +1$ \\
  \hhline{~---~}
 & \vdots & \vdots & \vdots & to $j=w-h-2$ \\
  \hhline{~---~}
  & $h  +\ell$ & $\ell-2$ &  $(\ell r+1,1),  \ldots, (\ell r + \ell - 2, \ell - 2)$ & \\
  \hhline{~----}
& $h  +\ell - 1$ & $\ell-2$ &  $(\ell r+1,0),  \ldots, (\ell r + \ell - 2, \ell - 3)$ & $j=0$ to \\
  \hhline{~---~}
  & \vdots & \vdots & \vdots &  $j=w-h-2 $ \\
  \hhline{~---~}
& $h+2$ & $1$ & $(\ell r + \ell - 2,0)$ & \\
        \hline
      \hline

\end{tabular}
\vspace{0.1in}
 \caption{Illustrating the change of variables from $u,j$ to $w=n-u+j, j$ for a data-local MRC with $\ell \geq 3$ local codes.} \label{tab:change_variable_three}

\end{table}

%

\epf

\section{Weight Enumerators of Local MRC} \label{sec:weights_local_mrc}

In this section, we identify the matroid structure of a local MRC. We derive the weight enumerators of a local MRC with two local codes.

\begin{prop} \label{prop:local_mrc_matroid}
Consider an $[n,k,d_{\min}]$ local MRC with locality $r$, where $n = k + h + \frac{k+h}{r} $ and $d_{\min} = h+2$. Let the supports of the $\frac{k+h}{r}$ local groups be $\{ S_1, S_2, \ldots, S_{\frac{k+h}{r}} \}$. The set of all independent sets of the local MRC are given by
\begin{equation}
\mathcal{I} = \{ I \subseteq S \ | \ |I| \leq k, |I \cap S_i| \leq r, 1 \leq i \leq \frac{k+h}{r}\}.
\end{equation}
\end{prop}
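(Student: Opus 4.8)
The plan is to establish this proposition by the same two-part argument used for Proposition~\ref{prop:mrc_matroid}, adapted to the all-symbol-locality setting. The structural feature I would exploit is that here the local groups $S_1,\dots,S_{(k+h)/r}$ form a partition of $[n]$, and each $S_i$ consists of $r$ ordinary code symbols together with their local parity, so $|S_i| = r+1$. I would then verify the two inclusions between $\mathcal{I}$ and the candidate family $\{\,I \subseteq S : |I| \le k,\ |I \cap S_i| \le r \text{ for all } i\,\}$.

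For one direction, suppose $I$ violates one of the two conditions. If $|I| \ge k+1$, then $\rho(I) = \text{rank}(G|_I) \le k < |I|$, so $I$ is dependent. If instead $|I \cap S_j| \ge r+1$ for some $j$, then $I \supseteq S_j$ since $|S_j| = r+1$; the column of $G$ corresponding to the local parity $c_{k+h+j}$ is a linear combination, with all coefficients $a_{j,i} \ne 0$, of the $r$ code-symbol columns of $S_j$, so $\text{rank}(G|_{S_j}) = r$ and hence $\rho(I) \le r + (|I| - (r+1)) < |I|$. In either case $I \notin \mathcal{I}$.

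Conversely, suppose $|I| \le k$ and $|I \cap S_i| \le r$ for every $i$. Since the $S_i$ partition $[n]$ and $|S_i| = r+1$, each $I \cap S_i$ can be completed within $S_i$ to a set of size exactly $r$ (possible as $|S_i| = r+1 > r \ge |I \cap S_i|$); the union of these completions is a set $E \supseteq I$ with $|E \cap S_i| = r$ for all $i$ and $|E| = r \cdot \frac{k+h}{r} = k+h$. By the defining property of the local MRC, $\mathcal{C}|_E$ is a $[k+h,k,h+1]$ MDS code, so every subset of $E$ of size at most $k$ is independent; as $I \subseteq E$ and $|I| \le k$, the columns $G|_I$ are linearly independent, i.e.\ $I \in \mathcal{I}$.

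The only point needing care, and the one genuine difference from the data-local case, is that here there are no global-parity coordinates lying outside all local groups (the global parities are treated as data symbols for their local codes), so the auxiliary set $E$ must be assembled entirely from within the $S_i$; the identity $\sum_i r = k+h = |E|$, which uses $r \mid (k+h)$, is exactly what makes such an $E$ — and hence the MDS restriction invoked above — available. I do not anticipate any obstacle beyond this bookkeeping.
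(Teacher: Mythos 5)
Your proposal is correct and follows essentially the same route as the paper, which proves this proposition by the same two-part argument as Proposition~\ref{prop:mrc_matroid}: dependence from either $|I|>k$ or a fully contained local group (whose parity column is a combination of its $r$ data columns), and independence by extending $I$ to a set $E$ with $|E\cap S_i|=r$ for all $i$, so that $|E|=k+h$ and the MDS property of $\mathcal{C}|_E$ applies. Your added bookkeeping about the $S_i$ partitioning $[n]$ with $|S_i|=r+1$ (so $E$ is assembled entirely from within the local groups) is exactly the adaptation the paper leaves implicit when it says the proof is ``similar.''
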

\bpf
The proof is similar to that of Proposition \ref{prop:mrc_matroid}.
\epf

\subsection{Weight Enumerators of Local MRC with Two Local Codes}

\begin{thm} \label{thm:allsymbol_two_codes}
Consider an $[n,k,d_{\min}]$ local MRC with locality $r$, where $k+h = 2r$, $r+1 \leq k < 2r$, $n = k +h+ \frac{k+h}{r}$ and $d_{\min} = h+2$. Let the supports of the two local groups be $\{ S_1, S_2 \}$. The weight enumerators of the code are given by
\begin{eqnarray*}
A_{h+2} & =  & 2 {n-r-1 \choose r-1}(q-1), \\
A_w & = &\sum_{j=0}^{w-h-3} {n \choose w}{w \choose j} (-1)^j (q^{k+w-n-j}-1) \\
&&+ \sum_{j=0}^{w-h-2} 2 (q-1) {n-r-1 \choose n-w+j-r-1} {n-w+j \choose j} (-1)^j q^{k+w-n-j}, \ \ \ \ \ \ h+3 \leq w \leq r+1, \\
A_w & = &\sum_{j=0}^{w-h-3} {n \choose w}{w \choose j} (-1)^j (q^{k+w-n-j}-1) \\
&&+ \sum_{j=w-r-1}^{w-h-2} 2 (q-1) {n-r-1 \choose n-w+j-r-1} {n-w+j \choose j} (-1)^j q^{k+w-n-j}, \ \ r+2 \leq w \leq n.
\end{eqnarray*}

\end{thm}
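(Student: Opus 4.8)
The plan is to follow the template of the proof of Theorem~\ref{thm:weight_two_codes} almost verbatim, since by Proposition~\ref{prop:local_mrc_matroid} the matroid of a local MRC with two local codes has the same shape as that of a data-local MRC with two local codes; only the relation between $k$ and $r$ changes. Here $k+h=2r$ forces $\frac{k+h}{r}=2$ and $n=k+h+2=2r+2$, so the two local groups $S_1,S_2$ each have size $r+1$ and partition $[n]$, and $n-d_{\min}=k$. I would start from the specialized identity \eqref{eq:weight_dmin},
\begin{equation*}
W_C(Z) = 1 + \sum_{\substack{U \subseteq [n]:\\ |U| \leq k}} Z^{n-|U|}(1-Z)^{|U|}\bigl(q^{k-\rho(U)}-1\bigr),
\end{equation*}
so that, exactly as in Section~\ref{sec:background}, the task reduces to counting, for each $0 \leq u \leq k$ and $0 \leq v \leq k$, the number of $U$ with $|U|=u$ and $\rho(U)=v$.

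Next I would carry out this count using Proposition~\ref{prop:local_mrc_matroid}. For $0 \leq u \leq r$ every $u$-subset is independent (it meets each $S_i$ in at most $u \leq r$ coordinates and has size $\leq k$), so $\rho(U)=u$ and there are $\binom{n}{u}$ of them. For $r+1 \leq u \leq k$, since $k<2r<2(r+1)$ a set cannot contain both $S_1$ and $S_2$; hence $U$ is dependent precisely when it contains exactly one of $S_1,S_2$, in which case $\rho(U)=u-1$, and there are $2\binom{n-r-1}{u-r-1}$ such sets (pick the contained group in $2$ ways and the remaining $u-r-1$ coordinates among the other $n-r-1$); the remaining $\binom{n}{u}-2\binom{n-r-1}{u-r-1}$ sets are independent with $\rho(U)=u$. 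This is exactly the case analysis in the proof of Theorem~\ref{thm:weight_two_codes}; the only difference is that here the largest relevant size $u=k$ is strictly below $2r$, so there is no degenerate size-$2r$ case, while the size-$k$ \emph{independent} sets contribute $q^{k-k}-1=0$ and may be dropped, leaving the ``independent'' part of the sum running only up to $u=k-1$.

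I would then substitute these counts, combine the two independent contributions into a single sum $\sum_{u=0}^{k-1}\binom{n}{u}Z^{n-u}(1-Z)^u(q^{k-u}-1)$ and the dependent contribution into $\sum_{u=r+1}^{k}2\binom{n-r-1}{u-r-1}Z^{n-u}(1-Z)^u q^{k-u}(q-1)$, expand $(1-Z)^u=\sum_j\binom{u}{j}(-1)^jZ^j$, and pass to $w=n-u+j$. Reading off the $(u,j)$ pairs for each $w$ as in Table~\ref{tab:change_variable_two}, the independent sum gives $h+3 \leq w \leq n$ with $0 \leq j \leq w-h-3$, while the dependent sum gives $h+2 \leq w \leq n$ with $\max(0,\,w-r-1) \leq j \leq w-h-2$; since $n-r-1=r+1$, the lower limit on $j$ switches from $0$ to $w-r-1$ exactly when $w$ passes $r+1$, which produces the three stated regimes $w=h+2$, $h+3 \leq w \leq r+1$, and $r+2 \leq w \leq n$ (the isolated coefficient $A_{h+2}$ coming only from the dependent sum at $u=k$, $j=0$). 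Collecting coefficients of like powers of $Z$ and using $\binom{n}{n-w+j}\binom{n-w+j}{j}=\binom{n}{w}\binom{w}{j}$ on the first sum yields the claimed formulas.

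The only real work is the bookkeeping: keeping the $j$-summation limits correct through the substitution $w=n-u+j$---in particular the transition at $w=r+1$ versus $w=r+2$---and checking the boundary behavior at $u=k$ (both that the independent $u=k$ term vanishes and that the dependent $u=k$ term is the unique source of $A_{h+2}$). Conceptually nothing beyond the proof of Theorem~\ref{thm:weight_two_codes} is needed; the argument is that proof with the range bound $2r$ replaced by $k$, which is why one expects the write-up to read simply ``the proof follows from the same counting.''
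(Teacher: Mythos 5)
Your proposal is correct and follows essentially the same route as the paper's own proof (given in Appendix~\ref{app:allsymbol_two_codes}): the identical case analysis on $(|U|,\rho(U))$ via Proposition~\ref{prop:local_mrc_matroid} with counts $\binom{n}{u}$ and $2\binom{n-r-1}{u-r-1}$, the same merging of the independent and dependent contributions with the independent sum truncated at $u=k-1$, and the same substitution $w=n-u+j$ with the $j$-range switching at $w=r+1$ versus $w=r+2$. Your observation that $A_{h+2}$ arises solely from the dependent term at $(u,j)=(k,0)$ is exactly the paper's Table~\ref{tab:change_variable_two_local} bookkeeping.
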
 

\begin{proof}
The proof of this theorem is similar to that of Theorem \ref{thm:weight_two_codes}. See Appendix \ref{app:allsymbol_two_codes}.

\end{proof}

\begin{note}
The weight enumerator of a local MRC with $\ell \geq 3$ local codes can be calculated in a similar way as in the case of data-local MRC with $\ell \geq 3$ local codes. The derivation and expressions of weight enumerators are avoided for lack of space.
\end{note}

\section{Higher Support Weights of MRC} \label{sec:hsw_mrc}

In this section, we define generalized Hamming weights (GHW) and present two properties of GHWs. Then, we  present the result by Britz \cite{Bri_hsw}, which relates the Tutte polynomial of a code to all the higher support weights. 
Subsequently, we give the GHW structure of a data-local MRC with two local codes and use them to derive the higher support weights of the code. Finally, we use the same techniques to derive the higher support weights of a local MRC with two local codes.

\subsection{Generalized Hamming Weights, Higher Support Weights and Tutte Polynomial}

\begin{defn}[Generalized Hamming Weights]
The $i^{th}$, $1 \leq i \leq k$, generalized Hamming weight~\cite{Wei} (also known as minimum support weight~\cite{HelKloLevYtr}) of a code ${\cal C}$ is defined by
\begin{equation}
d_i({\cal C}) \ = \ d_i \ = \  \min_{\substack{ \mathcal{D} < \mathcal{C} \\ \text{dim}(\mathcal{D}) = i }}
\left|\text{Supp}({\cal D}) \right| ,
\end{equation}
where $\mathcal{D} < \mathcal{C}$, is used to denote a subcode $\mathcal{D}$ of $\mathcal{C}$.
\end{defn}

\vspace{0.1in}

It is well known that $d=d_1 < d_2 < \ldots < d_k = n$.

\vspace{0.1in}

\begin{lem} \label{lem:GHW_code_dual_relation}
The GHWs of $\mathcal{C}$ are related to those of $\mathcal{C}^{\perp}$ by the following relation \cite{Wei}:
 \begin{equation} \label{eq:GHW_code_dual_relation}
 \{d_i, \ 1 \leq i \leq k\} \ = \ [n] \setminus \{n + 1 - d_j^{\perp}, \ \ 1 \leq j \leq n-k \}.
 \end{equation}
\end{lem}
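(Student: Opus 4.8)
The plan is to prove the duality relation for generalized Hamming weights (Wei's theorem) by a careful counting/dimension argument relating subcodes of $\mathcal{C}$ to subspaces on which $\mathcal{C}^\perp$ vanishes. First I would recall the basic dictionary: for a subset $X \subseteq [n]$, the ``shortened'' code $\mathcal{C}(X) = \{\mathbf{c} \in \mathcal{C} : \mathrm{Supp}(\mathbf{c}) \subseteq X\}$ is a subcode of $\mathcal{C}$, and every subcode $\mathcal{D} < \mathcal{C}$ with $\mathrm{Supp}(\mathcal{D}) \subseteq X$ satisfies $\mathcal{D} \subseteq \mathcal{C}(X)$, so $d_i(\mathcal{C}) = \min\{ |X| : \dim \mathcal{C}(X) \geq i \}$. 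Dually, $\dim \mathcal{C}(X) = |X| - \mathrm{rank}(G^\perp|_X) = |X| - \rho^\perp(X)$, where $G^\perp$ generates $\mathcal{C}^\perp$ and $\rho^\perp$ is the rank function of the matroid $M(\mathcal{C}^\perp)$ (equivalently the corank/nullity structure dual to $M(\mathcal{C})$).

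Next I would set $s_i := \min\{ |X| : |X| - \rho^\perp(X) \geq i \}$, so $d_i(\mathcal{C}) = s_i$, and argue that the values $n+1-d_j^\perp$ for $1 \leq j \leq n-k$ are precisely the ``jump positions'' that are \emph{not} of the form $s_i$. Concretely, for each $w$ with $0 \leq w \leq n$ define $f(w) = \max\{ |X| - \rho^\perp(X) : |X| = w \}$; then $f$ is nondecreasing, $f(0)=0$, $f(n) = n - k$ (since $\rho^\perp([n]) = n-k$), and $f$ increases by at most $1$ at each step (adding one coordinate changes $|X|-\rho^\perp(X)$ by $0$ or $1$). Therefore $f$ takes every value in $\{0,1,\dots,n-k\}$, and $s_i$ is exactly the first $w$ at which $f(w) = i$. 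The complement $[n] \setminus \{s_1,\dots,s_k\}$ consists of those $w \in [n]$ where $f(w) = f(w-1)$, i.e. where the rank increment is ``used up'' — and the number of such $w$ is $n - k$. The key identification is that these are exactly $\{\, n+1 - d_j^\perp : 1 \leq j \leq n-k \,\}$, which follows by running the same argument for $\mathcal{C}^\perp$: applying the monotone-step argument to the function $g(w) = \max\{|Y| - \rho(Y) : |Y| = w\}$ attached to $\mathcal{C}^\perp$ gives $d_j^\perp$ as the first $w$ with $g(w) = j$, and the ``non-jump'' positions of $f$ correspond under $w \mapsto n+1-w$ to the ``jump'' positions of $g$ via the matroid duality identity $\rho^\perp(X) = |X| - k + \rho([n] \setminus X)$ (equivalently $\mathrm{nullity}$ of $M(\mathcal{C})$ on $[n]\setminus X$), which translates a cardinality-$w$ set for $\mathcal{C}^\perp$ into a cardinality-$(n-w)$ complement for $\mathcal{C}$.

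I expect the main obstacle to be the bookkeeping in that last translation: making the bijection between ``$f$ stays flat at step $w$'' and ``$g$ jumps at step $n+1-w$'' fully precise, i.e. verifying that $w \notin \{s_1,\dots,s_k\}$ if and only if $n+1-w = d_j^\perp$ for some $j$, with the off-by-one in $n+1-w$ handled correctly at the endpoints $w=0$ and $w=n$. Everything else is a monotone step-size argument that is routine once the shortened-code/dimension dictionary is in place. Since Lemma~\ref{lem:GHW_code_dual_relation} is quoted from \cite{Wei}, an acceptable alternative is simply to cite Wei's original proof; but the self-contained route above, phrased entirely in terms of the matroid rank functions $\rho, \rho^\perp$ already introduced in Section~\ref{sec:background}, fits the paper's Tutte-polynomial framework and is the version I would write out.
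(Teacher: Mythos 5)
The paper does not actually prove this lemma --- it is quoted verbatim from Wei's paper \cite{Wei} --- so your self-contained argument is necessarily a different route, and it is a sound one: it is essentially the nullity/dimension--length-profile proof of Wei duality, phrased through the matroid rank functions $\rho,\rho^{\perp}$, which fits the Tutte-polynomial framework of Section~\ref{sec:background} better than Wei's original subcode/generator-matrix argument. Your dictionary $d_i(\mathcal{C})=\min\{|X| : \dim\mathcal{C}(X)\geq i\}$ with $\dim\mathcal{C}(X)=|X|-\rho^{\perp}(X)$ is correct, and the step you flag as the main obstacle does go through cleanly: from $\rho^{\perp}(X)=|X|-k+\rho([n]\setminus X)$ one gets $f(w)=k-n+w+g(n-w)$, hence $f(w)-f(w-1)=1-\bigl(g(n+1-w)-g(n-w)\bigr)$, so $f$ is flat at $w\in[n]$ exactly when $g$ jumps at $n+1-w$, i.e.\ exactly when $n+1-w=d_j^{\perp}$ for some $j$; since $w\mapsto n+1-w$ maps $[n]$ to $[n]$, there is no endpoint trouble. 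One numerical slip needs fixing: $f(n)=n-\rho^{\perp}([n])=n-(n-k)=k$, not $n-k$, so $f$ takes every value in $\{0,1,\ldots,k\}$ (not $\{0,\ldots,n-k\}$). With the value you wrote, ``$s_i$ is the first $w$ with $f(w)=i$'' would fail for $i>n-k$ whenever $k>n/2$, and it contradicts your own (correct) count that there are $k$ jump positions $\{d_1,\ldots,d_k\}$ and $n-k$ flat positions; with $f(n)=k$ the unit-step argument gives both counts and the strict inequalities $d_1<\cdots<d_k$ for free, and the identification $\{d_i\}=[n]\setminus\{n+1-d_j^{\perp}\}$ follows. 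Citing \cite{Wei}, as the paper does, is of course also acceptable.
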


\begin{lem} \label{lem:ghw_cardinality}
For any set $U \subseteq [n]$ such that $|U| \geq n-d_{i} + 1, 1 \leq i \leq k$, $\rho(U) \geq k-(i-1)$.
\end{lem}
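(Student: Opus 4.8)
The plan is to relate the rank function $\rho$ of the matroid $M(\mathcal{C})$ to the dimensions of subcodes of $\mathcal{C}$ with restricted support, and then to invoke the definition of the generalized Hamming weight $d_i$ as a minimum support size. The matroid structure of Proposition \ref{prop:mrc_matroid} is not needed; the statement holds for an arbitrary linear code.

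First I would recall the standard rank--nullity identity. For any $T \subseteq [n]$, let $\mathcal{C}(T) := \{ \bold{c} \in \mathcal{C} : \text{Supp}(\bold{c}) \subseteq T \}$. Writing $\bold{c} = \bold{x} G$, the condition $\text{Supp}(\bold{c}) \subseteq T$ is equivalent to $\bold{x}\, G|_{[n] \setminus T} = \bold{0}$, so $\bold{c} \mapsto \bold{x}$ identifies $\mathcal{C}(T)$ with the left kernel of $G|_{[n] \setminus T}$, which has dimension $k - \text{rank}(G|_{[n] \setminus T}) = k - \rho([n] \setminus T)$. Taking $T = [n] \setminus U$ gives $\rho(U) = k - \dim \mathcal{C}([n] \setminus U)$.

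Next, from $|U| \geq n - d_i + 1$ I get $|[n] \setminus U| \leq d_i - 1$. I then argue by contradiction: if $\dim \mathcal{C}([n] \setminus U) \geq i$, choose any subcode $\mathcal{D} < \mathcal{C}([n] \setminus U)$ with $\dim \mathcal{D} = i$; then $\text{Supp}(\mathcal{D}) \subseteq [n] \setminus U$, so $|\text{Supp}(\mathcal{D})| \leq d_i - 1 < d_i$, contradicting the definition of $d_i$ as the minimum support size over all $i$-dimensional subcodes of $\mathcal{C}$. Hence $\dim \mathcal{C}([n] \setminus U) \leq i - 1$, and substituting into the identity above yields $\rho(U) = k - \dim \mathcal{C}([n] \setminus U) \geq k - (i-1)$, which is the claim.

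I do not expect a genuine obstacle here: the only technical ingredient is the identity $\dim \mathcal{C}(T) = k - \rho([n] \setminus T)$, which is a routine linear-algebra computation. (One could alternatively deduce the lemma from Lemma \ref{lem:GHW_code_dual_relation}, but the direct argument above is shorter and more transparent.)
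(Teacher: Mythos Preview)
Your proof is correct and is essentially the same argument as the paper's. The paper phrases it via a block decomposition of the generator matrix $G = \left[\begin{smallmatrix} A & 0 \\ B & C \end{smallmatrix}\right]$ with the top rows spanning the subcode supported on $[n]\setminus U$, and then observes $\text{rank}(A) \leq i-1$ and $\text{rank}(A) + \text{rank}(C) = k$; this is exactly your identity $\rho(U) = k - \dim \mathcal{C}([n]\setminus U)$ together with the bound $\dim \mathcal{C}([n]\setminus U) \leq i-1$ coming from the definition of $d_i$.
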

\begin{proof}
Consider the generator matrix including all basis vectors of sub code shortened on $[n] \setminus U$. The matrix can be written as follows
\begin{equation}
G = \left [ \begin{array}{c|c} A & 0 \\ \hline \underbrace{B}_{d_i - 1 \ \text{columns}} & \underbrace{C}_{U} \end{array} \right ]
\end{equation}
We have that $\text{rank}(A) \leq i-1$ and $\text{rank}(A) + \text{rank}(C) = k$. Hence, we have $\text{rank}(C) \geq k-(i-1)$.
\end{proof}

For $1 \leq s \leq k$, the $s^{th}$ support weight polynomial of a code $\mathcal{C}$ is defined as a polynomial $W^{(s)}_C(Z) = \sum_{i=0}^n A_i^{(s)} Z^i$,
where $A_i^{(s)} = | D \in \mathcal{C} : \text{dim}(D) = s, \ wt(D) = i|$. For integers $s,t \geq 0$, we define
\begin{equation}
[t]_s = \prod_{i=0}^{s-1} (q^t - q^i) \ \ \  \text{and} \ \ \ {t \brack s} = \frac{[t]_s}{[s]_s},
\end{equation}
${s \brack t}$ indicate the number of $t$ dimensional subspaces of a given $s$ dimensional subspace over $\mathbb{F}_q$. ${s \brack t} = 0$ when $s < t$.

The higher support weights of a code are related to the Tutte polynomial of the code as follows (Britz's result \cite{Bri_hsw}, \cite{Bri_critical}, \cite{Bri_tutte}):

\begin{thm}
For $1 \leq s \leq k$, the $s^{th}$ support weight polynomial of a code can be determined from the Tutte polynomial of the code $T_C(X,Y)$ using the following relation:
\begin{equation}
W^{(s)}_C(Z) = Z^{n-k}(1-Z)^k \sum_{i=0}^s \frac{(-1)^{s-i}}{[s]_s} q^{{s-i \choose 2}} {s \brack i} T_C( \frac{1+(q^i-1)Z}{1-Z}, \frac{1}{Z} ),
\end{equation}
\end{thm}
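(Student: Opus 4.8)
\emph{Proof proposal.} The plan is to sidestep the Tutte polynomial and reduce the statement to the set-sum reformulation of Greene's theorem that was already derived in Section~\ref{sec:background}. Concretely, I would first prove the following ``higher'' analogue of \eqref{eq:weight_set_sum}: for every $s$ with $1 \le s \le k$,
\begin{equation}
W^{(s)}_C(Z) \;=\; \sum_{U \subseteq [n]} Z^{\,n-|U|}\,(1-Z)^{|U|}\,{k - \rho(U) \brack s}, \tag{$\star$}
\end{equation}
and then check that the claimed right-hand side collapses to exactly the sum in $(\star)$. Everything rests on two elementary ingredients together with one classical $q$-series identity.

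To obtain $(\star)$, fix $T \subseteq [n]$. An $s$-dimensional subcode $\mathcal{D}$ satisfies $\text{Supp}(\mathcal{D}) \subseteq T$ precisely when $\mathcal{D}$ is an $s$-dimensional subspace of the subcode of $\mathcal{C}$ obtained by shortening on $[n]\setminus T$; since that shortened code is the kernel of the puncturing map $\mathcal{C}\to\mathcal{C}|_{[n]\setminus T}$, it has dimension $k-\rho([n]\setminus T)$, so the number of such $\mathcal{D}$ is ${k-\rho([n]\setminus T)\brack s}$, with the usual convention that this equals $0$ when $k-\rho([n]\setminus T) < s$. Using the telescoping identity $\sum_{T \supseteq S} Z^{|T|}(1-Z)^{n-|T|} = Z^{|S|}$ (expand $\prod_{j\in S}Z\cdot\prod_{j\notin S}\big(Z+(1-Z)\big)$) with $S = \text{Supp}(\mathcal{D})$, then summing over all $s$-dimensional $\mathcal{D}$ and interchanging the two sums, one gets $W^{(s)}_C(Z) = \sum_T Z^{|T|}(1-Z)^{n-|T|}{k-\rho([n]\setminus T)\brack s}$; the substitution $U = [n]\setminus T$ yields $(\star)$. (As a sanity check, $s=0$ gives $W^{(0)}_C(Z)=1$ and $s=1$ gives $(W_C(Z)-1)/(q-1)$, as it must.)

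Second, the chain of equalities leading to \eqref{eq:weight_set_sum} used the field size only through $\frac{1+(q-1)Z}{1-Z}-1 = \frac{qZ}{1-Z}$; running the identical computation with $q$ replaced by $q^i$ gives $Z^{n-k}(1-Z)^k\,T_C\!\big(\frac{1+(q^i-1)Z}{1-Z},\frac1Z\big) = \sum_{U\subseteq[n]} Z^{n-|U|}(1-Z)^{|U|}\,q^{\,i(k-\rho(U))}$. Substituting this into the claimed formula and interchanging the finite sum over $i$ with the sum over $U$, it remains to verify, for each $U$ and with $m := k-\rho(U)\ge 0$, that $\frac{1}{[s]_s}\sum_{i=0}^{s}(-1)^{s-i}q^{{s-i\choose 2}}{s\brack i}\,q^{mi} = {m\brack s}$. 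This is precisely the $q$-binomial theorem in the form $\prod_{j=0}^{s-1}(x-q^{j}) = \sum_{i=0}^{s}(-1)^{s-i}q^{{s-i\choose 2}}{s\brack i}\,x^{i}$, evaluated at $x=q^{m}$: indeed $\prod_{j=0}^{s-1}(q^m-q^j) = [m]_s$ and $[m]_s/[s]_s = {m\brack s}$, and when $m<s$ the product contains the factor $q^m-q^m=0$, matching ${m\brack s}=0$. Comparing the resulting expression with $(\star)$ term by term in $U$ finishes the proof.

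The step I expect to be the crux is $(\star)$, i.e.\ identifying the count of $s$-dimensional subcodes with support confined to a fixed coordinate set as the Gaussian binomial coefficient ${k-\rho(U)\brack s}$ of the shortened-code dimension; once this and the telescoping identity are in hand, the remainder is the Greene-style substitution of Section~\ref{sec:background} plus a standard $q$-binomial identity. A logically equivalent route, closer to Britz's original derivation, is to instead expand ${k-\rho(U)\brack s}$ in $(\star)$ by the same $q$-binomial identity first and then recognize the resulting $q^{\,i(k-\rho(U))}$ terms as shifted evaluations of $T_C$; that is simply the present argument read in reverse.
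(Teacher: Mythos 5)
Your proposal is correct, but it should be compared against the fact that the paper does not prove this statement at all: the theorem is quoted as Britz's result (\cite{Bri_hsw}, \cite{Bri_critical}, \cite{Bri_tutte}), and what follows it in Section~\ref{sec:background} is only the algebraic simplification from the Tutte form down to the set-sum \eqref{eq:higher_support}, using the cited identity ${t \brack s} = \sum_{i=0}^{s} \frac{(-1)^{s-i}}{[s]_s} q^{\binom{s-i}{2}} {s \brack i} q^{it}$. Your argument is a genuine self-contained proof. Its second half (substituting $X = \frac{1+(q^i-1)Z}{1-Z}$, $Y = \frac{1}{Z}$ into \eqref{eq:tutte} exactly as in the derivation of \eqref{eq:weight_set_sum}, then applying that $q$-binomial identity at $x=q^{k-\rho(U)}$) coincides with the paper's post-theorem computation, read in the opposite direction. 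The genuinely new ingredient you supply is $(\star)$: the number of $s$-dimensional subcodes with support contained in $T$ equals ${k-\rho([n]\setminus T) \brack s}$, because such subcodes are precisely the $s$-dimensional subspaces of the code shortened on $[n]\setminus T$, whose dimension is $k-\rho([n]\setminus T)$ by rank--nullity; combined with the telescoping identity $\sum_{T \supseteq S} Z^{|T|}(1-Z)^{n-|T|} = Z^{|S|}$ and an exchange of finite sums this gives $W^{(s)}_C(Z) = \sum_{U \subseteq [n]} Z^{n-|U|}(1-Z)^{|U|} {k-\rho(U) \brack s}$, which is \eqref{eq:higher_support} without the truncation. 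All of these steps check out (including the vanishing convention ${m \brack s}=0$ for $m<s$, which is matched by the factor $q^m-q^m$ in $[m]_s$), and they are essentially Greene's and Britz's original derivation. What your route buys is independence from the citation: the Tutte-polynomial formula becomes a corollary of the elementary subcode-counting identity, and the working formula \eqref{eq:higher_support} that the paper actually uses for the MRC computations is obtained directly along the way.
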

The above equation can be simplified as follows.

\begin{eqnarray}
W^{(s)}_C(Z) & = & Z^{n-k}(1-Z)^k \sum_{i=0}^s \frac{(-1)^{s-i}}{[s]_s} q^{{s-i \choose 2}} {s \brack i} T_C( \frac{1+(q^i-1)Z}{1-Z}, \frac{1}{Z} ) \nonumber \\
& = &      Z^{n-k}(1-Z)^k   \sum_{i=0}^s \frac{(-1)^{s-i}}{[s]_s} q^{{s-i \choose 2}} {s \brack i}   \sum_{U \subseteq [n]} \left ( \frac{1+(q^i-1)Z}{1-Z} -1 \right )^{k - \rho(U)}\left (\frac{1}{Z}-1 \right )^{|U| - \rho(U)} \nonumber \\
& = &  Z^{n-k}(1-Z)^k    \sum_{i=0}^s \frac{(-1)^{s-i}}{[s]_s} q^{{s-i \choose 2}} {s \brack i}  \sum_{U \subseteq [n]} \left ( \frac{q^i Z}{1-Z} \right )^ {k - \rho(U)}  \left (\frac{1}{Z}-1 \right )^{|U| - \rho(U)} \nonumber \\
& = &  \sum_{U \subseteq [n]} Z^{n- |U|} (1-Z)^{|U|} \sum_{i=0}^s \frac{(-1)^{s-i}}{[s]_s} q^{{s-i \choose 2}} {s \brack i} q^{i(k-\rho(U))}. \\
& \stackrel{(a)}{=} & \sum_{U \subseteq [n]} Z^{n- |U|} (1-Z)^{|U|} {k-\rho(U) \brack s}, 
\end{eqnarray}

where $(a)$ follows from the following identity \cite{Bri_critical},
\begin{equation}
{t \brack s}  =   \sum_{i=0}^s \frac{(-1)^{s-i}}{[s]_s} q^{{s-i \choose 2}} {s \brack i} q^{it}
\end{equation}

For any set $U$ such that $|U| \geq n-d_{s} + 1$, $\rho(U) \geq k-(s-1)$  and hence ${k-\rho(U) \brack s} = 0$. Hence, the $s^{th}$ support weight polynomial  can be rewritten as
\begin{equation} \label{eq:higher_support}
W^{(s)}_C(Z) = \sum_{\substack{U \subseteq [n]: \\ |U| \leq n-d_{s}}} Z^{n- |U|} (1-Z)^{|U|} {k-\rho(U) \brack s}.
\end{equation}

\subsection{Higher Support Weights of Data-Local MRC with Two Local Codes}

\begin{prop} \label{prop:ghw_matroid}
Consider an $[n,k,d_{\min}]$ data-local MRC $\mathcal{C}$ with locality $r$, where $k = 2r$, $n = k + \frac{k}{r} + h$ and $d_{\min} = h+2$. The generalized Hamming weights of the code  are given by
\begin{equation}
d_s = \begin{cases} h+1+s, & 1 \leq s \leq r, \\
h + 2+s, & r+1 \leq s \leq 2r. \end{cases}
\end{equation}
\end{prop}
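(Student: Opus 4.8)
The plan is to compute, for each $s$, the minimum support size $|\mathrm{Supp}(\mathcal D)|$ over all $s$-dimensional subcodes $\mathcal D<\mathcal C$, using the matroid description from Proposition~\ref{prop:mrc_matroid}. The key observation is the standard duality between generalized Hamming weights and the rank function of the matroid $M(\mathcal C)$: for a linear code, $d_s = \min\{\,|U| : U\subseteq[n],\ \rho([n]\setminus U)\le k-s\,\}$, equivalently $d_s = n - \max\{\,|T| : \rho(T)\le k-s\,\}$. So I would first translate the problem into: find the largest coordinate set $T$ whose punctured code has rank at most $k-s$, and then set $d_s = n-|T|$. Since $k=2r$, we need $\rho(T)\le 2r-s$.

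Next I would use the matroid structure $\mathcal I=\{I:\ |I|\le 2r,\ |I\cap S_1|\le r,\ |I\cap S_2|\le r\}$, where $|S_1|=|S_2|=r+1$ and $S_1\cup S_2$ covers the $2r$ message coordinates plus the two local parities; the remaining $h$ coordinates are the global parities. For a set $T$, $\rho(T) = \min\{\,|T|,\ \min(|T\cap S_1|,r)+\min(|T\cap S_2|,r)+|T\setminus(S_1\cup S_2)|\,\}$ — i.e.\ the rank is $|T|$ minus the number of ``excess'' elements, where each $S_i$ contributes an excess of $1$ if $T\supseteq S_i$. To make $\rho(T)$ small while $|T|$ large, I want $T$ to contain whole local groups so as to absorb rank via the locality constraints. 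Concretely: if $T$ contains neither $S_i$ entirely, then $\rho(T)=\min(|T|,2r)$, so the biggest such $T$ with $\rho(T)\le 2r-s$ has $|T|=2r-s$ when $1\le s$; this gives $d_s \le n-(2r-s) = h+2+s$ (recall $n=2r+2+h$). If $T$ contains exactly one full $S_i$, then one unit of rank is absorbed, so $|T|$ can be as large as $2r-s+1$, yielding $d_s\le n-(2r-s+1) = h+1+s$, but this requires $T$ to have room to contain all $r+1$ coordinates of some $S_i$, i.e.\ $2r-s+1\ge r+1$, i.e.\ $s\le r$. If $T$ contains both $S_1$ and $S_2$, two units are absorbed and $|T|$ can reach $2r-s+2$, requiring $2r-s+2 \ge |S_1\cup S_2| = 2r+1$, i.e.\ $s\le 1$; but this only helps for $s=1$, giving $d_1\le h+1$, which should then be sharpened — actually $d_1=d_{\min}=h+2$ is already known, so I must double-check the covering count (the two local groups overlap in message coordinates? they do not — $S_1$ and $S_2$ are disjoint supports of size $r+1$ covering the $2r$ messages and $2$ local parities, so $|S_1\cup S_2|=2r+2$ and containing both needs $2r-s+2\ge 2r+2$, i.e.\ $s\le 0$, so this case never applies). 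This disjointness is exactly the point that makes the two regimes $s\le r$ and $s\ge r+1$ come out cleanly.

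Finally I would prove the matching lower bounds: for any $s$-dimensional $\mathcal D$, $|\mathrm{Supp}(\mathcal D)| = n - \rho([n]\setminus\mathrm{Supp}(\mathcal D))$ and $\rho([n]\setminus\mathrm{Supp}(\mathcal D))\le k-s = 2r-s$, so I need to show no coordinate set $T$ with $\rho(T)\le 2r-s$ has $|T|$ exceeding the bound claimed above — i.e.\ $|T|\le 2r-s$ for all $s\ge r+1$, and $|T|\le 2r-s+1$ for $1\le s\le r$. This follows from the rank formula: $|T|-\rho(T)$ equals the number of full local groups contained in $T$, which is at most $1$ when $s\ge r+1$ (because containing two disjoint groups of size $r+1$ forces $|T|\ge 2r+2 > 2r-s+\text{(excess)}$, contradicting $\rho(T)\le 2r-s$ once one checks the arithmetic) and is at most $2$ always, with the ``$2$'' case requiring $s\le 0$. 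The main obstacle I anticipate is being careful with the exact covering combinatorics — precisely which coordinates lie in $S_1\cup S_2$, whether the local parities are inside their groups (they are, by the Definition), and handling the boundary cases $s=r$ and $s=r+1$ — but once the rank formula $\rho(T)=|T|-\#\{i:S_i\subseteq T\}$ (valid for $|T|\le 2r$; capped at $k$ otherwise) is established, the two-case answer drops out by elementary counting, and I would also cross-check against Lemma~\ref{lem:ghw_cardinality} and the known $d_1=h+2$, $d_{2r}=n$.
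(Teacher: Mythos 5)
Your proof is correct, but it takes a genuinely different route from the paper's. The paper does not touch the matroid of $\mathcal{C}$ in this proof: it considers the subcode of $\mathcal{C}^\perp$ spanned by the two local parities, invokes Theorem 4.3 of \cite{PraLalKum_arxiv} to get $d_1^{\perp}=r+1$ and $d_i^{\perp}=2r+i$ for $2\le i\le h+2$, and then converts these into the $d_s$ of $\mathcal{C}$ via Wei duality (Lemma~\ref{lem:GHW_code_dual_relation}). You instead compute $d_s$ intrinsically: from Proposition~\ref{prop:mrc_matroid} the rank function is $\rho(T)=\min\bigl(|T|-\#\{i: S_i\subseteq T\},\,k\bigr)$, the shortened-code identity gives $d_s=n-\max\{|T| : \rho(T)\le k-s\}$, and since the two groups are disjoint of size $r+1$, a set of size $2r-s+1$ can contain one full group exactly when $s\le r$ (and never two, since that would need $2r+2$ coordinates), which yields the two regimes $h+1+s$ and $h+2+s$; your existence and lower-bound checks both go through. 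Your argument is self-contained and elementary---it reuses exactly the rank-counting machinery the paper develops for the weight enumerators and extends mechanically to local MRC and to $\ell\ge 3$ groups---whereas the paper's proof is shorter but leans on an external theorem and on an unverified ``it can be verified'' claim about $k$-cores. Two cosmetic repairs for a final write-up: your first rank formula must also be capped at $k$ (as you note later), and the assertion $|\mathrm{Supp}(\mathcal D)|=n-\rho([n]\setminus\mathrm{Supp}(\mathcal D))$ should instead read $|\mathrm{Supp}(\mathcal D)|=n-|T|\ge n-\max\{|T'| : \rho(T')\le k-s\}$ with $T=[n]\setminus\mathrm{Supp}(\mathcal D)$ and $\rho(T)\le k-s$; neither affects the conclusion.
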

\begin{proof}
Consider the sub code $\mathcal{D}$ of the dual code  $\mathcal{C}^\perp$ spanned by the two local parities. It can be verified that $\mathcal{C}$ is a data-local MRC if and only if any $k$-core of $\mathcal{D}$ is also $k$-core of $\mathcal{C}^\perp$.
Applying Theorem 4.3 of \cite{PraLalKum_arxiv}, we have that generalized Hamming weights of $\mathcal{C}^\perp$ are given by
\begin{eqnarray}
d_1^{\perp} & = & r+1 \nonumber \\
d_i^{\perp} & = & 2r+i, \ \ 2 \leq i \leq h+2.
\end{eqnarray}
The theorem follows by applying the duality of generalized Hamming weights from Lemma \ref{lem:GHW_code_dual_relation}.
\end{proof}

\begin{thm} \label{thm:hsw_two_codes}
Consider an $[n,k,d_{\min}]$ data-local MRC with locality $r$, where $k = 2r$, $n = k + \frac{k}{r} + h$ and $d_{\min} = h+2$. For $1 \leq s \leq r$, the higher support weights of the code are given in Table \ref{tab:hsw_data_two}.
%
%
%
 \begin{table}[ht]
\centering
\begin{tabular}{||c|l||}
   \hline 
   \multicolumn{2}{||c||}{Higher support weights for $2 \leq s \leq r-1$} \\
   \hline
  $w$ & $\ \ \ \ \ \ \ \ \ \ \ \ \ \ \ \ \ \ \ \ \ \ \ \ \ \  A^{(s)}_w$  \\
  &     \\
  \hline
  & \\
  $h+1+s$ &  $2  {n-r-1 \choose r-s}$ \\
  & \\
  \hline
  & \\
  $h+2+s \leq w$ &  $\sum_{j=0}^{w-h-2-s} {n \choose w}{w \choose j} (-1)^j {k+w-n-j \brack s}$ \\
  & \\
 $ \leq r+h+1$ & $+ \sum_{j=0}^{w-h-1-s} 2  {n-r-1 \choose n-w+j-r-1} {n-w+j \choose j} (-1)^j ({k+w-n-j+1 \brack s}-{k+w-n-j \brack s})$ \\
 & \\
  \hline
  & \\
$r+h+2 \leq w$  &  $\sum_{j=0}^{w-h-2-s} {n \choose w}{w \choose j} (-1)^j {k+w-n-j \brack s}$ \\
& \\
$\leq n$ & $+ \sum_{j=w-r-h-1}^{w-h-1-s} 2  {n-r-1 \choose n-w+j-r-1} {n-w+j \choose j} (-1)^j ({k+w-n-j+1 \brack s}-{k+w-n-j \brack s})$ \\
& \\
  \hline
   \multicolumn{2}{||c||}{Higher support weights for $s = r$} \\
     \hline
  $w$ & $\ \ \ \ \ \ \ \ \ \ \ \ \ \ \ \ \ \ \ \ \ \ \ \ \ \  A^{(r)}_w$  \\
  &     \\
  \hline

  & \\
  $h+1+r$ &  $2$ \\
  & \\
  \hline
  & \\
  $r+h+2 \leq w$ &  $\sum_{j=0}^{w-h-2-s} {n \choose w}{w \choose j} (-1)^j {k+w-n-j \brack s}$ \\
  & \\
 $ \leq n$ & $+ \sum_{j=w-r-h-1}^{w-h-1-s} 2  {n-r-1 \choose n-w+j-r-1} {n-w+j \choose j} (-1)^j ({k+w-n-j+1 \brack s}-{k+w-n-j \brack s})$ \\
 & \\
  \hline
\end{tabular}
\vspace{0.1in}
 \caption{Higher support weights of data-local MRC with two local codes.} \label{tab:hsw_data_two}

\end{table}

\end{thm}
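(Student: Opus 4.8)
The plan is to follow the template of the proof of Theorem~\ref{thm:weight_two_codes} verbatim, except that I start from the higher-support-weight identity~\eqref{eq:higher_support} rather than from~\eqref{eq:weight_poly}. By Proposition~\ref{prop:ghw_matroid}, for $1 \le s \le r$ we have $d_s = h+1+s$; since $k = 2r$ forces $n = 2r+2+h$, this gives $n - d_s = 2r+1-s$, and~\eqref{eq:higher_support} reads
\[
W^{(s)}_C(Z) \;=\; \sum_{\substack{U \subseteq [n]:\ |U| \le 2r+1-s}} Z^{n-|U|}(1-Z)^{|U|}\,{k-\rho(U) \brack s}.
\]
So, exactly as in the weight-enumerator case, everything reduces to counting, for each pair $(u,v)$ with $0 \le u \le 2r+1-s$, the number of $U \subseteq [n]$ with $|U| = u$ and $\rho(U) = v$. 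This count is already available from the proof of Theorem~\ref{thm:weight_two_codes}, since it only uses the matroid of Proposition~\ref{prop:mrc_matroid}: for $0 \le u \le r$ every $U$ is independent, so $\rho(U) = u$ and there are $\binom{n}{u}$ such sets; for $r+1 \le u \le 2r+1-s$ the sets with $|U \cap S_1| = r+1$ or $|U \cap S_2| = r+1$ satisfy $\rho(U) = u-1$ (there are $2\binom{n-r-1}{u-r-1}$ of them) and all other sets have $\rho(U) = u$. Because $2r+1-s \le 2r = k$, none of the ``$u > k$'' cases (which complicate the $\ell \ge 3$ proof) occur here.

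Substituting these counts into the displayed formula and merging the two ``$\rho(U)=u$'' contributions into a single sum gives
\[
W^{(s)}_C(Z) = \sum_{u=0}^{2r+1-s} \binom{n}{u} Z^{n-u}(1-Z)^u {k-u \brack s} \;+\; \sum_{u=r+1}^{2r+1-s} 2\binom{n-r-1}{u-r-1} Z^{n-u}(1-Z)^u\!\left({k-u+1 \brack s} - {k-u \brack s}\right).
\]
The key structural observation at this point is that ${t \brack s} = 0$ for $t < s$: this annihilates the $u = 2r+1-s$ term of the first sum and the $-{k-u \brack s}$ piece of the $u = 2r+1-s$ term of the second sum, leaving exactly the single surviving contribution $2\binom{n-r-1}{r-s}$ at the lowest weight $w = n - (2r+1-s) = h+1+s$, which is the first row of Table~\ref{tab:hsw_data_two} (it becomes $2$ when $s = r$, since $\binom{n-r-1}{0}=1$). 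I would state this truncation explicitly and note that when $s = r$ the correction sum degenerates to the single index $u = r+1 = 2r+1-s$, so the middle weight range $h+2+s \le w \le r+h+1$ of the table is empty --- this is precisely why the $s=r$ case is tabulated separately.

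Finally, expanding $(1-Z)^u = \sum_{j=0}^{u}\binom{u}{j}(-Z)^j$ and performing the change of variables $w = n-u+j$, organized as in Table~\ref{tab:change_variable_two} but with the $s$-shifted endpoints, then applying $\binom{n}{n-w+j}\binom{n-w+j}{j} = \binom{n}{w}\binom{w}{j}$ and collecting coefficients of each power $Z^w$, yields the three ranges $w = h+1+s$, $h+2+s \le w \le r+h+1$, and $r+h+2 \le w \le n$ displayed in Table~\ref{tab:hsw_data_two}; the case $s=1$ is covered by the same formulas and is consistent with Theorem~\ref{thm:weight_two_codes} divided by $q-1$. \textbf{The main obstacle} is the bookkeeping in this last step: reconstructing the analogue of Table~\ref{tab:change_variable_two} with $s$-dependent limits and determining, for each $w$-interval, the correct range of the summation index $j$ --- in particular where the window of $j$ starts being clipped by the upper constraint $u \le 2r+1-s$ versus the lower constraint $u \ge r+1$ --- while simultaneously keeping track of the Gaussian-binomial truncations so that no spurious vanishing or surviving terms are introduced.
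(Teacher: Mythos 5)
Your proposal is correct and follows essentially the same route as the paper's proof: restrict to $|U|\le n-d_s=2r+1-s$ via Proposition~\ref{prop:ghw_matroid}, reuse the $(|U|,\rho(U))$ counts from Theorem~\ref{thm:weight_two_codes}, exploit ${t \brack s}=0$ for $t<s$ to truncate the full-rank sum, and finish with the change of variables $w=n-u+j$. The truncation/bookkeeping details you flag as the main obstacle are exactly what the paper also handles (its footnote $(a)$ and the analogue of Table~\ref{tab:change_variable_two}), and your explicit treatment of the $s=r$ degenerate case is consistent with the paper's separate tabulation.
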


\begin{proof}

We have to consider all sets $U$ such that $|U| = u \leq n-d_{s} = 2r + h + 2 - (h+1+s) = 2r-s+1$ (applying Proposition \ref{prop:ghw_matroid}). Note that  $2r-s+1 \geq r+1$, since  $2 \leq s \leq r$. The proof follows from the same counting as in the proof of Theorem \ref{thm:weight_two_codes}.
\vspace{0.1in}
\ben
\item For a set $U$ of size $0 \leq u \leq r$, $\rho(U) = u$. The number of such sets are ${n \choose u}$. 
\vspace{0.05in}
\item For a set $U$ of size $r+1 \leq u \leq 2r-s+1$, there are the following two cases possible.   
\ben
\item The number of sets with $\rho(U) = u-1$ are $2{n-r-1 \choose u-r-1}$.
\vspace{0.05in}
\item The number of sets with $\rho(U) = u$ are ${n \choose u} - 2{n-r-1 \choose u-r-1}$. 
\een
\een
Hence, the $s^{th}$ support weight polynomial in \eqref{eq:higher_support} can be rewritten as
\begin{eqnarray}
W^{(s)}_C(Z) &= & \sum_{u=0}^r {n \choose u} Z^{n-u} (1-Z)^u {k-u \brack s} \nonumber \\
&& + \sum_{u=r+1}^{2r-s+1} 2{n-r-1 \choose u-r-1} Z^{n-u} (1-Z)^u {k-u+1 \brack s} \nonumber \\
&& + \sum_{u=r+1}^{2r-s+1} \left ( {n \choose u} - 2{n-r-1 \choose u-r-1} \right ) Z^{n-u} (1-Z)^u {k-u \brack s} \nonumber \\
& \stackrel{(a)}{=} & \sum_{u=0}^{2r-s} \sum_{j=0}^{u} {n \choose u} {u \choose j} Z^{n-u+j} (-1)^j {k-u \brack s} \nonumber \\
&& + \sum_{u=r+1}^{2r-s+1} \sum_{j=0}^{u} 2{n-r-1 \choose u-r-1} {u \choose j} Z^{n-u+j} (-1)^j ({k-u+1 \brack s} - {k-u \brack s}),
\end{eqnarray}
where $(a)$ follows since for the case when $u=2r-s+1$ and the set is full rank, the term in the summation is zero.
 By applying the change of variables $w = n-u+j, j=j$ to the above equation, the higher support weights given in the theorem statement follow.

 \begin{note}
 The relation between $W^{(1)}(Z)$ and the weight enumerator polynomial of a code $W_C(Z)$ is given by
 \begin{equation}
 W_C(Z) = 1 + (q-1) W_C^{(1)}(Z).
 \end{equation}
 
 \end{note}

%
%
%

\end{proof}

\begin{thm} \label{thm:hsw_two_codes}
Consider an $[n,k,d_{\min}]$ data-local MRC with locality $r$, where $k = 2r$, $n = k + \frac{k}{r} + h$ and $d_{\min} = h+2$. For $r+1 \leq s \leq 2r$, the higher support weights of the code are given by
\begin{eqnarray}
A^{(s)}_w = {n \choose w} \sum_{j=0}^{w-n+2r-s}  {w \choose j} (-1)^j {k-n+w-j \brack s}_q, \ \  h+2+s \leq w \leq n.
\end{eqnarray}

\end{thm}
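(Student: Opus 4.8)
The plan is to specialize the higher-support-weight formula \eqref{eq:higher_support} to the range $r+1 \le s \le 2r$ and to exploit the one feature that distinguishes this case from the case $1 \le s \le r$ treated above: the truncation point $n - d_s$ now falls strictly below $r+1$, so that no dependent set ever enters the computation and the formula collapses to a single clean sum. Concretely, I would first invoke Proposition \ref{prop:ghw_matroid}, which gives $d_s = h+2+s$ for $r+1 \le s \le 2r$, whence $n - d_s = (2r + 2 + h) - (h+2+s) = 2r - s$; since $s \ge r+1$ this yields $n - d_s = 2r - s \le r-1$.

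Next I would observe that every set $U \subseteq [n]$ with $|U| \le n - d_s = 2r-s$ is independent in the matroid of the code: indeed $|U| \le r - 1 < k$ and $|U \cap S_i| \le |U| \le r-1 \le r$ for $i = 1,2$, so Proposition \ref{prop:mrc_matroid} applies and $\rho(U) = |U|$. Hence the number of sets $U$ with $|U| = u$ (for $0 \le u \le 2r-s$) is exactly $\binom{n}{u}$, and \eqref{eq:higher_support} reduces to
\begin{equation*}
W^{(s)}_C(Z) = \sum_{u=0}^{2r-s} \binom{n}{u} Z^{n-u}(1-Z)^u {k-u \brack s}.
\end{equation*}
This is the essential simplification; the dependent-set cases that produced the multi-term expressions for $s \le r$ simply do not occur here.

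Finally I would expand $(1-Z)^u = \sum_{j=0}^{u}\binom{u}{j}(-1)^j Z^j$ and carry out the change of variables $w = n - u + j$, $j = j$ exactly as in the proof of Theorem \ref{thm:weight_two_codes}: a pair $(u,j)$ contributes to the coefficient of $Z^w$ with $u = n - w + j$. Collecting terms and using $\binom{n}{n-w+j}\binom{n-w+j}{j} = \binom{n}{w}\binom{w}{j}$ gives
\begin{equation*}
A^{(s)}_w = \binom{n}{w} \sum_{j} \binom{w}{j}(-1)^j {k-n+w-j \brack s} .
\end{equation*}
It then remains to pin down the ranges: the constraint $0 \le u \le 2r-s$ becomes $0 \le j \le w - n + 2r - s$, and $w$ runs over $h+2+s \le w \le n$, the lower endpoint coming from $(u,j) = (2r-s,0)$. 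A convenient consistency check is that ${k-n+w-j \brack s}$ vanishes precisely when $k-n+w-j < s$, i.e.\ when $j > w - n + 2r - s$, so the stated upper limit on $j$ is in fact automatically enforced by the Gaussian binomial. I do not anticipate a genuine obstacle: the only point requiring care is the bookkeeping of the $j$- and $w$-ranges, and the one substantive step is the verification that $n - d_s = 2r-s \le r-1$, which is what forces every relevant $U$ to be independent.
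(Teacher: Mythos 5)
Your proposal is correct and follows essentially the same route as the paper's proof: use Proposition \ref{prop:ghw_matroid} to get $n-d_s = 2r-s \le r-1$, note that all sets of this size are independent so only the ${n \choose u}$ term with $\rho(U)=u$ survives in \eqref{eq:higher_support}, and then perform the change of variables $w = n-u+j$. Your explicit appeal to Proposition \ref{prop:mrc_matroid} to justify independence, and the observation that the Gaussian binomial automatically enforces the upper limit on $j$, merely make explicit what the paper leaves implicit.
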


\begin{proof} We have to consider all sets $U$ such that $|U| = u \leq n-d_{s} = 2r + h + 2 - (h+2+s) = 2r-s$ (applying Proposition \ref{prop:ghw_matroid}). Note that  $2r-s \leq r-1$, since  $r+1 \leq s \leq 2r$. 
The $s^{th}$ support weight polynomial in \eqref{eq:higher_support} can be rewritten as
\begin{eqnarray}
W^{(s)}_C(Z) &= & \sum_{u=0}^{2r-s} {n \choose u} Z^{n-u} (1-Z)^u {k-u \brack s} \nonumber \\
& = & \sum_{u=0}^{2r-s} \sum_{j=0}^{u} {n \choose u} {u \choose j} Z^{n-u+j} (-1)^j {k-u \brack s} \nonumber \\
& = & \sum_{w=n-2r+s}^{n} \sum_{j=0}^{w-n+2r-s} {n \choose w} {w \choose j} Z^{w} (-1)^j {k-n+w-j \brack s}.
\end{eqnarray}

\end{proof}

\subsection{Higher Support Weights of Local MRC with Two Local Codes}

\begin{prop} \label{prop:ghw_matroid_local}
Consider an $[n,k,d_{\min}]$ local MRC with locality $r$, where $k+h = 2r$, $r+1 \leq k < 2r$, $n = k +h+ \frac{k+h}{r}$ and $d_{\min} = h+2$. The generalized Hamming weights of the code \cite{PraLalKum_arxiv} are given by
\begin{equation}
d_s = \begin{cases} h+1+s, & 1 \leq s \leq r-h, \\
h + 2+s, & r-h+1 \leq s \leq 2r-h. \end{cases}
\end{equation}
\end{prop}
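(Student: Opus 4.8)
The plan is to follow the strategy of the proof of Proposition~\ref{prop:ghw_matroid} and pass to the dual code. Here $n = k+h+\frac{k+h}{r} = 2r+2$, and since $\frac{k+h}{r}=2$ there are exactly two local groups $S_1,S_2$, which are disjoint, each of size $r+1$, and together partition $[n]$; hence $\mathcal{C}^\perp$ has length $2r+2$ and dimension $n-k = h+2$. First I would take $\mathcal{D} < \mathcal{C}^\perp$ to be the subcode spanned by the two local parity checks: each of these has support exactly $S_i$ and Hamming weight $r+1$, so $\mathcal{D}$ is two-dimensional with full support $S_1\cup S_2 = [n]$. As in Proposition~\ref{prop:ghw_matroid}, one verifies that $\mathcal{C}$ is a local MRC if and only if any $k$-core of $\mathcal{D}$ is also a $k$-core of $\mathcal{C}^\perp$; the appropriate instance of Theorem~4.3 of~\cite{PraLalKum_arxiv} then yields the generalized Hamming weights of $\mathcal{C}^\perp$ as $d_1^\perp = r+1$ and $d_i^\perp = 2r+i-h$ for $2\leq i\leq h+2$. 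Substituting these into the duality relation of Lemma~\ref{lem:GHW_code_dual_relation}, the excluded set $\{n+1-d_j^\perp : 1\leq j\leq n-k\}$ equals $\{1,\ldots,h+1\}\cup\{r+2\}$, and reading off its complement in $\{1,\ldots,2r+2\}$ in increasing order gives precisely $d_s = h+1+s$ for $1\leq s\leq r-h$ and $d_s = h+2+s$ for $r-h+1\leq s\leq 2r-h=k$.

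A self-contained alternative, which I would also write out, avoids~\cite{PraLalKum_arxiv} and uses only Proposition~\ref{prop:local_mrc_matroid} together with the matroid identity $d_s = n - \max\{\,|V| : \rho(V)\leq k-s\,\}$ (one inequality is Lemma~\ref{lem:ghw_cardinality}, the reverse comes from the shortened-code construction in its proof). Writing $V_i = V\cap S_i$, Proposition~\ref{prop:local_mrc_matroid} gives $\rho(V) = \min\bigl(\min(|V_1|,r)+\min(|V_2|,r),\,k\bigr)$, and since $s\geq 1$ forces $k-s<k$ the constraint $\rho(V)\leq k-s$ is equivalent to $\min(|V_1|,r)+\min(|V_2|,r)\leq k-s$. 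The key point is that taking a full local group ($|V_i|=r+1$) uses only $r$ units of the budget $k-s$ while contributing $r+1$ to $|V|$, whereas any other choice contributes exactly what it costs; a short case split then shows $\max|V| = k-s+1$ exactly when a full group fits, i.e.\ when $k-s\geq r$, equivalently $s\leq k-r = r-h$, and $\max|V| = k-s$ otherwise. With $n=2r+2$ and $k=2r-h$ this turns $d_s = n-(k-s+1)$ into $h+1+s$ and $d_s = n-(k-s)$ into $h+2+s$, the transition occurring at $s=r-h$; the hypothesis $r+1\leq k<2r$ is exactly what makes both regimes non-empty, since it forces $1\leq h\leq r-1$.

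The step I expect to be the main obstacle is the structural input to the first approach: verifying that the all-symbol-locality MRC property, in the window $r+1\leq k<2r$ where the local parities now involve global-parity coordinates rather than only message coordinates, does translate into the stated $k$-core condition on $\mathcal{D}\subseteq\mathcal{C}^\perp$, and that Theorem~4.3 of~\cite{PraLalKum_arxiv} indeed outputs $d_i^\perp = 2r+i-h$ in this regime (as opposed to the $d_i^\perp = 2r+i$ of the data-local case of Proposition~\ref{prop:ghw_matroid}). The second approach circumvents this, at the price of justifying both directions of the matroid GHW identity and handling the boundary indices $s=r-h$ and $s=r-h+1$ carefully: there $d_{r-h}=r+1$ while $d_{r-h+1}=r+3$, so the sequence jumps by $2$, which is consistent with the known strict monotonicity $d_1<d_2<\cdots<d_k=n$ and serves as a useful internal check. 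Once either the dual weights or the explicit $\max|V|$ count is in hand, the remaining bookkeeping is routine.
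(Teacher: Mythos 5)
Your proposal is correct, and it is really two proofs in one. Your first route is essentially the paper's own argument: the paper proves this proposition by saying it is "similar to" Proposition~\ref{prop:ghw_matroid}, i.e.\ it passes to the subcode $\mathcal{D}<\mathcal{C}^\perp$ spanned by the two local parities, invokes the $k$-core characterization and Theorem~4.3 of \cite{PraLalKum_arxiv} to get the dual generalized Hamming weights, and finishes with Lemma~\ref{lem:GHW_code_dual_relation}; your guessed dual weights $d_1^\perp=r+1$, $d_i^\perp=2r+i-h$ for $2\le i\le h+2$ are exactly the ones forced by Wei duality from the stated $d_s$ (the excluded set is $\{1,\dots,h+1\}\cup\{r+2\}$, as you computed), so that part matches the paper modulo the external reference you rightly flag as the only nontrivial input. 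Your second route is genuinely different and self-contained: it uses only Proposition~\ref{prop:local_mrc_matroid} and the matroid identity $d_s=n-\max\{|V|:\rho(V)\le k-s\}$ (one direction being Lemma~\ref{lem:ghw_cardinality}, the other the shortening construction), and the budget argument is sound --- since the two groups are disjoint of size $r+1$ and partition $[2r+2]$, one gets $\rho(V)=\min(\min(|V_1|,r)+\min(|V_2|,r),k)$, a second full group can never fit ($k-s<2r$), and the maximum is $k-s+1$ precisely when $k-s\ge r$, i.e.\ $s\le r-h$, giving $d_s=h+1+s$ there and $h+2+s$ otherwise. This buys independence from \cite{PraLalKum_arxiv} and from the $k$-core verification, at the cost of justifying the matroid characterization of the $d_s$, which you do. One small quibble: the hypothesis $r+1\le k<2r$ is not "exactly what makes both regimes non-empty" --- the second regime $r-h+1\le s\le k$ is non-empty for any $h\ge 0$; the hypothesis merely encodes $1\le h\le r-1$ from the all-symbol-locality setup --- but this side remark does not affect the proof.
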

\begin{proof}
The proof is similar to the proof of Proposition \ref{prop:ghw_matroid}.
\end{proof}

\begin{thm} \label{thm:hsw_two_codes_local}
Consider an $[n,k,d_{\min}]$ local MRC with locality $r$, where $k+h = 2r$, $r+1 \leq k < 2r$, $n = k +h+ \frac{k+h}{r}$ and $d_{\min} = h+2$. For $1 \leq s \leq r-h$, the higher support weights of the code are given in Table \ref{tab:hsw_data_two_local}.

 \begin{table}[ht]
\centering
\begin{tabular}{||c|l||}
   \hline 
   \multicolumn{2}{||c||}{Higher support weights for $1 \leq s \leq r-h-1$} \\
   \hline
  $w$ & $\ \ \ \ \ \ \ \ \ \ \ \ \ \ \ \ \ \ \ \ \ \ \ \ \ \  A^{(s)}_w$  \\
  &     \\
  \hline
  & \\
  $h+1+s$ &  $2  {n-r-1 \choose r-s}$ \\
  & \\
  \hline
  & \\
  $h+2+s \leq w$ &  $\sum_{j=0}^{w-h-2-s} {n \choose w}{w \choose j} (-1)^j {k+w-n-j \brack s}$ \\
  & \\
 $ \leq r+1$ & $+ \sum_{j=0}^{w-h-1-s} 2  {n-r-1 \choose n-w+j-r-1} {n-w+j \choose j} (-1)^j ({k+w-n-j+1 \brack s}-{k+w-n-j \brack s})$ \\
 & \\
  \hline
  & \\
$r+2 \leq w$  &  $\sum_{j=0}^{w-h-2-s} {n \choose w}{w \choose j} (-1)^j {k+w-n-j \brack s}$ \\
& \\
$\leq n$ & $+ \sum_{j=w-r-1}^{w-h-1-s} 2  {n-r-1 \choose n-w+j-r-1} {n-w+j \choose j} (-1)^j ({k+w-n-j+1 \brack s}-{k+w-n-j \brack s})$ \\
& \\
  \hline
   \multicolumn{2}{||c||}{Higher support weights for $s = r-h$} \\
     \hline
  $w$ & $\ \ \ \ \ \ \ \ \ \ \ \ \ \ \ \ \ \ \ \ \ \ \ \ \ \  A^{(r-h)}_w$  \\
  &     \\
  \hline

  & \\
  $1+r$ &  $2$ \\
  & \\
  \hline
  & \\
  $r+2 \leq w$ &  $\sum_{j=0}^{w-h-2-s} {n \choose w}{w \choose j} (-1)^j {k+w-n-j \brack s}$ \\
  & \\
 $ \leq n$ & $+ \sum_{j=w-r-h-1}^{w-h-1-s} 2  {n-r-1 \choose n-w+j-r-1} {n-w+j \choose j} (-1)^j ({k+w-n-j+1 \brack s}-{k+w-n-j \brack s})$ \\
 & \\
  \hline
\end{tabular}
\vspace{0.1in}
 \caption{Higher support weights of local MRC with two local codes.} \label{tab:hsw_data_two_local}

\end{table}

\end{thm}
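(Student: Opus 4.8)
The plan is to mimic the proof of Theorem~\ref{thm:hsw_two_codes} verbatim, the only genuine work being a rank--cardinality count for the local-MRC matroid of Proposition~\ref{prop:local_mrc_matroid}. I would start from the support-weight identity \eqref{eq:higher_support},
\[
W^{(s)}_C(Z) \;=\; \sum_{\substack{U \subseteq [n]:\\ |U| \leq n-d_s}} Z^{n-|U|}(1-Z)^{|U|}\,{k-\rho(U) \brack s},
\]
so everything reduces to counting, for each pair $(u,v)$, the number of $U\subseteq[n]$ with $|U|=u\le n-d_s$ and $\rho(U)=v$. By Proposition~\ref{prop:ghw_matroid_local}, for $1\le s\le r-h$ we have $d_s=h+1+s$, hence $n-d_s=k+1-s$ (using $n=k+h+2$), and since $s\le r-h$ this is at least $r+1$; in particular every cardinality $u$ that occurs satisfies $u\le k$, so the global-dimension cap on the rank is never active.

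For the count I would invoke Proposition~\ref{prop:local_mrc_matroid} together with the fact that for two local codes the supports $S_1,S_2$ partition $[n]$ into two blocks of size $r+1$. For $u\le r$ every $u$-subset is independent, contributing $\binom{n}{u}$ sets of rank $u$. For $r+1\le u\le n-d_s\le k$, a $u$-subset is dependent precisely when it contains one of $S_1,S_2$ entirely (it cannot contain both, since $u<n=2(r+1)$), in which case its rank is exactly $u-1$; this happens for $2\binom{n-r-1}{u-r-1}$ subsets, while the remaining $\binom{n}{u}-2\binom{n-r-1}{u-r-1}$ subsets are independent of rank $u$. Substituting these counts into \eqref{eq:higher_support} and regrouping, the independent subsets assemble into $\sum_u\binom{n}{u}Z^{n-u}(1-Z)^u{k-u \brack s}$ and the correction from the dependent subsets assembles into $\sum_{u\ge r+1}2\binom{n-r-1}{u-r-1}Z^{n-u}(1-Z)^u\big({k-u+1 \brack s}-{k-u \brack s}\big)$, exactly as in Theorem~\ref{thm:hsw_two_codes} but with the parameter relation $n=k+h+2$ in place of $n=2r+2+h,\ k=2r$.

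Finally I would expand $(1-Z)^u=\sum_{j=0}^u\binom{u}{j}(-1)^jZ^j$, pass to the variable $w=n-u+j$, apply the identity $\binom{n}{n-w+j}\binom{n-w+j}{j}=\binom{n}{w}\binom{w}{j}$, and read off the coefficient of $Z^w$. The extreme term $u=k+1-s$ (respectively $u=r+1$ when $s=r-h$) contributes only through $j=0$ and yields the isolated value $A^{(s)}_{h+1+s}$; for the other values of $w$ one tabulates which $(u,j)$ pairs map to each $w$, which splits the $j$-range into the two regimes $h+2+s\le w\le r+1$ (where $j$ starts at $0$) and $r+2\le w\le n$ (where the lower limit of $j$ jumps to $w-r-1$ because the correction sum requires $u\ge r+1$), matching the three rows of the table. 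The case $s=r-h$ must be isolated because there $n-d_s=r+1$, so the correction sum collapses to the single index $u=r+1$. I expect the sole delicate point to be this bookkeeping of $j$-ranges under the change of variables together with the boundary cases $w=h+1+s$ and $s=r-h$; the rest is the same routine manipulation of Gaussian binomials already carried out for Theorem~\ref{thm:hsw_two_codes}.
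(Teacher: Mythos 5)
Your proposal is correct and follows essentially the same route as the paper: the paper likewise invokes Proposition~\ref{prop:ghw_matroid_local} to bound $u \leq n-d_s = 2r-h-s+1 \geq r+1$, reuses the two-local-group rank count from Theorem~\ref{thm:allsymbol_two_codes} (rank $u$ unless $U$ contains a full local group, in which case rank $u-1$, counted by $2\binom{n-r-1}{u-r-1}$), substitutes into \eqref{eq:higher_support}, drops the vanishing full-rank term at $u = 2r-h-s+1$ since ${s-1 \brack s}=0$, and finishes with the change of variables $w=n-u+j$. One incidental observation: carried out explicitly, your count gives the isolated value $A^{(s)}_{h+1+s} = 2\binom{n-r-1}{r-h-s}$, which agrees with the entry $2$ at $s=r-h$ but suggests the table's entry $2\binom{n-r-1}{r-s}$ for $1\leq s\leq r-h-1$ is a typo inherited from the data-local case.
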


\begin{proof}

We have to consider all sets $U$ such that $|U| = u \leq n-d_{s} = 2r +  2 - (h+1+s) = 2r-s-h+1$ (applying Proposition \ref{prop:ghw_matroid_local}). Note that  $2r-s-h+1 \geq r+1$, since  $1 \leq s \leq r-h$. The proof follows from the same counting as in the proof of Theorem \ref{thm:allsymbol_two_codes}.
\vspace{0.1in}
\ben
\item For a set $U$ of size $0 \leq u \leq r$, $\rho(U) = u$. The number of such sets are ${n \choose u}$. 
\vspace{0.05in}
\item For a set $U$ of size $r+1 \leq u \leq 2r-s-h+1$, there are the following two cases possible.   
\ben
\item The number of sets with $\rho(U) = u-1$ are $2{n-r-1 \choose u-r-1}$.
\vspace{0.05in}
\item The number of sets with $\rho(U) = u$ are ${n \choose u} - 2{n-r-1 \choose u-r-1}$. 
\een
\een
Hence, the $s^{th}$ support weight polynomial in \eqref{eq:higher_support} can be rewritten as
\begin{eqnarray}
W^{(s)}_C(Z) &= & \sum_{u=0}^r {n \choose u} Z^{n-u} (1-Z)^u {k-u \brack s} \nonumber \\
&& + \sum_{u=r+1}^{2r-s-h+1} 2{n-r-1 \choose u-r-1} Z^{n-u} (1-Z)^u {k-u+1 \brack s} \nonumber \\
&& + \sum_{u=r+1}^{2r-s-h+1} \left ( {n \choose u} - 2{n-r-1 \choose u-r-1} \right ) Z^{n-u} (1-Z)^u {k-u \brack s} \nonumber \\
& \stackrel{(a)}{=} & \sum_{u=0}^{2r-s-h} \sum_{j=0}^{u} {n \choose u} {u \choose j} Z^{n-u+j} (-1)^j {k-u \brack s} \nonumber \\
&& + \sum_{u=r+1}^{2r-s-h+1} \sum_{j=0}^{u} 2{n-r-1 \choose u-r-1} {u \choose j} Z^{n-u+j} (-1)^j ({k-u+1 \brack s} - {k-u \brack s}),
\end{eqnarray}
where $(a)$ follows since for the case when $u=2r-s-h+1$ and the set is full rank, the term in the summation is zero.
 By applying the change of variables $w = n-u+j, j=j$ to the above equation, the higher support weights given in the theorem statement follow.
 
%
%
%
%

\end{proof}

\begin{thm} \label{thm:hsw_two_codes_local}
Consider an $[n,k,d_{\min}]$ local MRC with locality $r$, where $k+h = 2r$, $r+1 \leq k < 2r$, $n = k +h+ \frac{k+h}{r}$ and $d_{\min} = h+2$. For $r-h+1 \leq s \leq 2r-h$, the higher support weights of the code are given by
\begin{eqnarray}
A^{(s)}_w = {n \choose w} \sum_{j=0}^{w-n+2r-s}  {w \choose j} (-1)^j {k-n+w-j \brack s}_q, \ \  h+2+s \leq w \leq n.
\end{eqnarray}

\end{thm}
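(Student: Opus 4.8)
The plan is to run the same argument used for the companion result on data-local MRC with two local codes in the range $r+1 \le s \le 2r$, but reading the generalized Hamming weights off Proposition~\ref{prop:ghw_matroid_local} instead of Proposition~\ref{prop:ghw_matroid}, and using the parameter relation $n = k+h+\frac{k+h}{r} = 2r+2$. First I would record from Proposition~\ref{prop:ghw_matroid_local} that $d_s = h+2+s$ for $r-h+1 \le s \le 2r-h$, so that in \eqref{eq:higher_support} only sets $U$ with $|U| \le n-d_s = 2r-h-s$ contribute; since $s \ge r-h+1$ this forces $|U| \le r-1$.

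The crucial simplification is that this size bound trivializes the rank computation: for any $U$ with $|U| \le r-1$ we have $|U \cap S_i| \le |U| \le r$ for $i=1,2$, so by Proposition~\ref{prop:local_mrc_matroid} every such $U$ is independent and $\rho(U) = |U|$. Hence, in contrast to the full case analysis of Theorem~\ref{thm:weight_two_codes}, the count in \eqref{eq:higher_support} has a single branch: for each $0 \le u \le 2r-h-s$ there are exactly ${n \choose u}$ sets $U$ of size $u$, all of rank $u$, so
\begin{equation*}
W^{(s)}_C(Z) \;=\; \sum_{u=0}^{2r-h-s} {n \choose u}\, Z^{n-u}(1-Z)^{u}\, {k-u \brack s}.
\end{equation*}

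It then remains to expand $(1-Z)^u = \sum_{j=0}^{u}{u \choose j}(-1)^j Z^j$, substitute $w = n-u+j$ (equivalently $u = n-w+j$), and collect the coefficient of $Z^w$ using ${n \choose n-w+j}{n-w+j \choose j} = {n \choose w}{w \choose j}$ — exactly the change of variables carried out in the proof of Theorem~\ref{thm:weight_two_codes}. The coefficient of $Z^w$ is supported on $d_s = h+2+s \le w \le n$, the lower end coming from $u \le 2r-h-s$ and the upper end from $u \ge 0$; writing the upper summation limit on $j$ loosely as $w-n+2r-s$ introduces no error, since for the surplus indices the Gaussian binomial ${k-n+w-j \brack s}$ has upper argument less than $s$ and hence vanishes. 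This produces the stated formula for $A^{(s)}_w$. As a consistency check, at $s = 2r-h = k$ only $U=\varnothing$ survives, giving $W^{(k)}_C(Z) = Z^n {k \brack k} = Z^n$, i.e. $A^{(k)}_n = 1$, in agreement with $d_k = n$. I do not expect a genuine obstacle: the one substantive point is the observation that the matroid is free in the relevant size range, after which the derivation is the routine bookkeeping already performed for the weight enumerator.
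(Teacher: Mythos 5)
Your proposal is correct and follows essentially the same route as the paper: restrict to $|U|\le n-d_s=2r-h-s\le r-1$ via Proposition~\ref{prop:ghw_matroid_local}, observe that all such sets are independent so $\rho(U)=|U|$, and then expand $(1-Z)^u$ and change variables $w=n-u+j$ exactly as in the weight-enumerator proofs. Your explicit remark that the surplus indices up to $j=w-n+2r-s$ contribute nothing because ${k-n+w-j \brack s}$ vanishes is a point the paper leaves implicit (its derivation stops the $j$-sum at $w-n+2r-h-s$), so that observation is a welcome clarification rather than a deviation.
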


\begin{proof} We have to consider all sets $U$ such that $|U| = u \leq n-d_{s} = 2r  + 2 - (h+2+s) = 2r-h-s$ (applying Proposition \ref{prop:ghw_matroid_local}). Note that  $2r-h-s \leq r-1$, since  $r-h+1 \leq s \leq 2r-h$. 
The $s^{th}$ support weight polynomial in \eqref{eq:higher_support} can be rewritten as
\begin{eqnarray}
W^{(s)}_C(Z) &= & \sum_{u=0}^{2r-h-s} {n \choose u} Z^{n-u} (1-Z)^u {k-u \brack s} \nonumber \\
& = & \sum_{u=0}^{2r-h-s} \sum_{j=0}^{u} {n \choose u} {u \choose j} Z^{n-u+j} (-1)^j {k-u \brack s} \nonumber \\
& = & \sum_{w=n-2r+h+s}^{n} \sum_{j=0}^{w-n+2r-h-s} {n \choose w} {w \choose j} Z^{w} (-1)^j {k-n+w-j \brack s}.
\end{eqnarray}

\end{proof}

\section{Conclusions and Ongoing Work} \label{sec:concl}

 In this paper, we identified the matroid structures corresponding to data-local and local MRCs. The matroid structures of these codes are used to determine the associated Tutte polynomial. Using Greene's result, we obtained explicit expressions for the weight enumerators of data-local and local MRCs. Also using Britz's result, we obtained expressions for higher support weights of data-local and local MRCs with two local codes. It is part of ongoing work to investigate whether the non-negativity of these weight enumerators and higher support weights result in a non-trivial lower bound on the field size of MRC.
 
 

\section*{Acknowledgements}

The first author would like to thank Prof. P. Vijay Kumar for the valuable discussions and comments.

\bibliographystyle{IEEEtran}
\bibliography{mrc}

\appendices

\section{Proof of Theorem \ref{thm:allsymbol_two_codes}} \label{app:allsymbol_two_codes}

For any set $U$ such that $|U| = u \leq n-d_{\min} =  k = 2r-h$, we have the following cases
\vspace{0.1in}
\ben
\item For a set $U$ of size $0 \leq u \leq r$, the set is independent. Hence, rank of the set $\rho(U) = u$. The number of sets  of this type are $N_{u,0} = {n \choose u}$. 
\vspace{0.05in}
\item For a set $U$ of size $r+1 \leq u \leq 2r-h$, there are the following two cases possible.   
\ben
\item Let $U$ be such that either $|U \cap S_1| = r+1$ or $|U \cap S_2| = r+1$. In this case, the set $U$ is dependent. The rank of the set $\rho(U)$, which is the size of the largest independent set in $U$, is given by $\rho(U) = u-1$. The number of sets $N_{u,1}$ of this type are $N_{u,1} = 2{n-r-1 \choose u-r-1}$.
\vspace{0.05in}
\item Let $U$ be such that $|U \cap S_i| \leq r, i = 1,2$. In this case, the set $U$ is independent and hence, rank of the set $\rho(U) = u$. The number of sets of this type are $N_{u,0}-N_{u,1} = {n \choose u} - 2{n-r-1 \choose u-r-1}$. 
\een
\een
Hence, the weight enumerator polynomial in \eqref{eq:weight_poly} can be rewritten as
\begin{eqnarray}
W_C(Z) &= &1+ \sum_{u=0}^r {n \choose u} Z^{n-u} (1-Z)^u (q^{k-u} - 1) \\
&& + \sum_{u=r+1}^{2r-h} 2{n-r-1 \choose u-r-1} Z^{n-u} (1-Z)^u (q^{k-u+1} - 1) \\
&& + \sum_{u=r+1}^{2r-h} \left ( {n \choose u} - 2{n-r-1 \choose u-r-1} \right ) Z^{n-u} (1-Z)^u (q^{k-u} - 1) \\
& = & 1+  \sum_{u=0}^{2r-h-1} {n \choose u} Z^{n-u} (1-Z)^u (q^{k-u} - 1) \\
&& + \sum_{u=r+1}^{2r-h} 2{n-r-1 \choose u-r-1} Z^{n-u} (1-Z)^u (q^{k-u+1} - q^{k-u}) \\
& = & 1+ \sum_{u=0}^{2r-h-1} \sum_{j=0}^{u} {n \choose u} {u \choose j} Z^{n-u+j} (-1)^j (q^{k-u} - 1) \\
&& + \sum_{u=r+1}^{2r-h} \sum_{j=0}^{u} 2{n-r-1 \choose u-r-1} {u \choose j} Z^{n-u+j} (-1)^j (q^{k-u+1} - q^{k-u}) \\
& = & 1+ T_1 + T_2.
\end{eqnarray}

 By applying the change of variables $w = n-u+j, j=j$ to the above equation, we first identify the $(u,j)$ pairs which result in a given value of $w$ (Table \ref{tab:change_variable_two_local}).

\begin{table}[ht]
\centering
\begin{tabular}{||c|c|c|l|c||}
\hline
\hline
  Term & $w$ & No. of  & $(u,j)$ pairs & Range of $j$ \\
  in Sum & & terms & & \\
  \hline
        
 \multirow{4}{*}{$T_1$} & $n$ & $2r-h$ & $(0,0), (1,1), \ldots, (2r-h-1,2r-h-1)$ & \\
  \hhline{~---~}
  & $n-1$ & $2r-h-1$ & $(1,0), (2,1), \ldots, (2r-h-1, 2r-h-2)$ & $j=0$ to\\
  \hhline{~---~}
 & \vdots & \vdots & \vdots & $j=w-h-3$ \\
  \hhline{~---~}
& $h+3$ & $1$ & $(2r-h-1,0)$ & \\
\hline
\hline
  \multirow{7}{*}{$T_2$} & $n$ & $r-h$ & $(r+1,r+1), (r+2,r+2), \ldots, (2r-h,2r-h)$  &  \\
  \hhline{~---~}
  & $n-1$ & $r-h$ & $(r+1,r), (r+2,r+1), \ldots, (2r-h, 2r-h-1)$ &  $j=w-r-1$ \\
  \hhline{~---~}
  & \vdots & \vdots & \vdots & to $j=w-h-2$ \\
  \hhline{~---~}
  & $r+2$ & $r-h$ &  $(r+1,1), (r+2,r-1), \ldots, (2r-h,r-h)$ &  \\
  \hhline{~----}
& $r+1$ & $r-h$ &  $(r+1,0), (r+2,1), \ldots, (2r-h,r-h-1)$ & $j=0$ to \\
  \hhline{~---~}
 & \vdots & \vdots & \vdots & $j=w-h-2$ \\
  \hhline{~---~}
& $h+2$ & $1$ & $(2r-h,0)$ & \\
        \hline
      \hline

\end{tabular}

\vspace{0.1in}
 \caption{Illustrating the change of variables from $u,j$ to $w=n-u+j, j$ for a local MRC with two local codes.} \label{tab:change_variable_two_local}

\end{table}

\end{document}